\newcommand{\eps}         {\varepsilon}
\newcommand{\bareps}	  {\bar{\varepsilon}}
\newcommand{\pow}   	  {\mathrm{pow}}
\newcommand{\norm}[1]     {\| #1 \|}
\newcommand{\nor}[1]	  {\mathrm{{\bf n}}_{#1}}
\newcommand{\cancel}[1]   {}
\renewcommand{\leq}       {\leqslant}
\renewcommand{\geq}       {\geqslant}
\newtheorem{definition}{Definition}
\newtheorem{theorem}{Theorem}[section]
\newtheorem{lemma}{Lemma}[section]
\newtheorem{corollary}{Corollary}[section]
\newtheorem{observation}{Observation}[section]
\newbox\ProofSym
\begin{document}

\begin{titlepage}

\title{Delaunay Edge Flips in Dense Surface Triangulations
\thanks{Research supported by NSF grants CCF-0430735 and CCF-0635008 and RGC, Hong Kong, China
(HKUST 6181/04E).}} 

\author{Siu-Wing Cheng\thanks{Department of Computer Science and
Engineering, HKUST, Clear Water Bay, Hong Kong.  Email: {\tt scheng@cse.ust.hk}} 
\quad\quad Tamal K. Dey\thanks{
Department of Computer Science and Engineering,
The Ohio State University, Columbus, OH 43210, USA.
Email: {\tt tamaldey@cse.ohio-state.edu}}
}

\maketitle

\thispagestyle{empty}

\begin{abstract}
Delaunay flip is an elegant, simple tool to convert
a triangulation of a point set to its Delaunay triangulation.
The technique has been researched extensively for full dimensional 
triangulations of point sets. However, an important case of triangulations
which are not full dimensional is surface triangulations in three
dimensions. In this paper we address the question of converting
a surface triangulation
to a subcomplex of the Delaunay triangulation with edge flips.
We show that the surface triangulations which closely 
approximate a smooth surface with uniform density 
can be transformed to a Delaunay
triangulation with a simple edge flip algorithm.
The condition on uniformity becomes less
stringent with increasing density of the triangulation. If the condition
is dropped completely, the flip algorithm still terminates
although the output surface triangulation becomes ``almost Delaunay"
instead of exactly Delaunay. 
\end{abstract}

\end{titlepage}

\section{Introduction}
\label{sec:introduction}
The importance of computing Delaunay triangulations of point sets
in applications of science and engineering cannot be overemphasized.
A number of different algorithms have been proposed for their
computations~\cite{BKOS00,BY01,Edel}. Among them flip based algorithms
are most popular and perhaps the most dominant approach in practice.
The sheer elegance and simplicity of this approach make
it attractive to implement. 

Since the introduction of Delaunay flips by
Lawson~\cite{Law77} for point sets in two dimensions, several important
extensions have been made including higher dimensional point sets.
Given any triangulation of the convex hull of a point set in
two dimensions, it is known that Delaunay edge flips can convert
the triangulation to the Delaunay triangulation. The rule for 
Delaunay edge flips is simple. First, check
locally if the circumscribing ball of a triangle $t$ contains a vertex
of another triangle $t'$ sharing an edge $e$ with it. If so,  
replace $e$ with the other diagonal edge contained in the
union of $t$ and $t'$. An elegant result is that this process
terminates with the output as the Delaunay triangulation~\cite{BKOS00,BY01,Edel}.
In higher dimensions, the edge flips can be naturally
extended to bi-stellar flips~\cite{Edel}.
However, the approach extended by bi-stellar
flips does not work in higher dimensions. Already in three dimensions
there are examples where the flipping process can get stuck~\cite{Edel}.
Notwithstanding this negative result, Joe~\cite{Joe} showed 
how to construct Delaunay triangulations by inserting points
incrementally and applying bi-stellar flips after each point insertion.
Edelsbrunner and
Shah~\cite{ES96} extended this result to higher dimensional point sets
and to weighted Delaunay triangulations. Recently, Shewchuk~\cite{Shew05}
showed that a combination of flips and some other local operation
called star splay can convert an ``almost Delaunay" triangulation
to the Delaunay one quite efficiently. 

All the aforementioned results deal
with full dimensional triangulations of a point set. 
An important case of triangulations which are not
full dimensional is surface triangulations
in three dimensions. 
Given the increasing demand of computing 
surface triangulations that are sub-complexes of Delaunay
triangulations~\cite{AB98,BO03,Dey06}, 
it is natural to ask if a surface triangulation
can be converted to a Delaunay one by edge flips 
and, if so, under what conditions.
Such a flip algorithm will be useful in many applications. 
For example,
in geometric modeling, shapes are often represented with
subdivision surfaces~\cite{ZS99} or with isosurfaces~\cite{LC87}.
These surfaces are not necessarily Delaunay. If
one can convert these surfaces to a Delaunay one, a number
of tools that exploit Delaunay properties can be used
for further processing. 

In this work we address the question of Delaunay flips
in surface triangulations. Notice that our goal is to
convert a surface triangulation embedded in $\mathbb{R}^3$
to another surface triangulation which is a sub-complex
of the three dimensional Delaunay triangulation of the
vertex set. This is different from the framework considered
elsewhere~\cite{BS05,DZM07} where
a triangulated surface endowed with a flat metric is converted into
an intrinsic Delaunay triangulation comprised of
simplices (not necessarily planar)
embedded in the surface.
In this case the embedding of the surface in $\mathbb{R}^3$
does not play any role whereas in our case the positions
of the vertices in $\mathbb{R}^3$ determine the Delaunay
flips. 

It turns out that dense surface
triangulations are amenable to a simple edge flip algorithm.
A triangulation is dense if it approximates a smooth surface
with sufficient resolution.
We show that a dense triangulation can be flipped to a Delaunay
triangulation if the density is uniform in some sense.
The condition on uniformity depends on the density. The higher
the density, the less stringent is the condition.
The practical implication of this result is that reasonably dense
triangulations can be converted to Delaunay triangulations
with a simple edge flip algorithm. Such dense 
triangulations are numerous in practice. Subdivision and isosurface
meshes are two such examples. Triangulations of moving
vertices offer another such example~\cite{CD08,Shew05}.
In fact, the results in this paper have been used for
a recent algorithm on maintaining deforming meshes with
provable guarantees~\cite{CD08}. What happens if we do not have
the uniformity condition? We show that the flip algorithm still terminates
but the output surface may not be Delaunay. Nonetheless,
this surface is ``almost Delaunay" in the sense that the diametric 
ball of each triangle shrunk by a small amount remains empty.
Bandyopadhyay and Snoeyink~\cite{BS04} showed the usefulness of
such approximate Delaunay triangulations in molecular modeling.
Because of the approximate emptiness properties of the
circumscribing balls of the triangles, these approximate
Delaunay triangulations may find other applications where
exact Delaunay triangulations are not required. 

\section{Preliminaries}

\subsection{Definitions and results}

We need some definitions and results from $\varepsilon$-sampling theory~\cite{AB98}.  
Let $\Sigma\subset\mathbb{R}^3$ be a smooth compact surface without boundary. The
\emph{medial axis} is the set of centers of all maximally empty balls. The {\em
reach} $\gamma(\Sigma)$ of $\Sigma$ is the infimum over Euclidean distances of all
points in $\Sigma$ to its medial axis. This is also the infimum of the
local feature size of $\Sigma$ as defined by Amenta and Bern~\cite{AB98}. 

A surface triangulation $T$ is a finite simplicial $2$-manifold 
embedded in $\mathbb{R}^3$. We say $T$ is a triangulation of 
a surface $\Sigma$ if vertices of $T$
lie in $\Sigma$ and
its underlying space $|T|$ is homeomorphic to $\Sigma$. 
The triangulation $T$ 
has a {\em consistent orientation}
with $\Sigma$ if the oriented normal of each
triangle makes at most $\frac{\pi}{2}$ angle with the oriented
normals of $\Sigma$ at the vertices. 
For a triangle
$t\in T$, let $\rho(t)$ denote the circumradius of $t$.

\subsection{Uniform dense triangulations}
\begin{definition}
A triangulation $T$ of a surface $\Sigma$ is $\eps$-dense if 
each triangle
$t\in T$ has $\rho(t)\leq \eps\gamma(\Sigma)$ and $T$ has
a consistent orientation with $\Sigma$. 
Furthermore, for $\delta<1$, if 
any two vertices in $T$ has distance more than $\delta\eps\gamma(\Sigma)$,
$T$ is called $(\eps,\delta)$-dense or $\delta$-uniform and $\eps$-dense.
\end{definition}

We use notation $B(c,r)$ to denote a ball with center $c$ and radius $r$.
A {\em circumscribing} ball of a triangle $t\subset\mathbb{R}^3$
is any ball that has the vertices of $t$ on its boundary.
The diametric ball $D_t$ is the smallest such ball;
$D_t=B(c,\rho(t))$ where $c$ is the circumcenter
of $t$. We say a vertex
$v\in T$ {\em stabs} a ball $B$ if $v$ lies inside $B$. 
If $t$ shares
an edge, say $pq$, with a triangle $t'=pqs$, then $s$ is a {\em neighbor}
vertex of $t$. Clearly, each triangle has three neighbor vertices. 
 
\begin{definition}
A triangle $t\in T$ is {\em stabbed}
if $D_t$ is stabbed by a vertex of $T$. 
We say $t$ is {\em locally} stabbed
if the stabbing vertex is one of the three neighbor
vertices of $t$ (Figure~\ref{stab}).
\end{definition}
Theorem~\ref{thm1} and Theorem~\ref{thm2} are main results for
uniform dense triangulations.
\begin{theorem}
For $\delta=2\sin 24\eps$ and 
$\eps<\frac{\pi}{72}$, 
any $(\eps,\delta)$-dense triangulation has a stabbed
triangle if and only if it has a locally stabbed triangle.
\label{thm1}
\end{theorem}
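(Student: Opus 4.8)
The ``if'' direction is immediate from the definitions: a locally stabbed triangle is, in particular, stabbed, so any triangulation possessing a locally stabbed triangle possesses a stabbed one. The content is the ``only if'' direction, which I would establish by a descent (minimal counterexample) argument. Consider all pairs $(t,v)$ with $t\in T$ and $v$ a vertex of $T$ stabbing $D_t$; by hypothesis this set is nonempty, and since $T$ is finite we may pick a pair minimizing the distance $d(v,t)$ from $v$ to the closed triangle $t$. If the minimizing $v$ is one of the three neighbor vertices of $t$, then $t$ is locally stabbed and we are done; so assume it is not, i.e. $v$ projects strictly outside $t$.

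In that case I would cross the edge $pq$ of $t$ that faces the projection of $v$, passing to the adjacent triangle $t'=pqs$ (the subcase where $v$ faces a vertex of $t$ rather than an edge is handled by a routine separate argument), and show two things: (i) $v$ also stabs $D_{t'}$, and (ii) $d(v,t')<d(v,t)$. Together these contradict the minimality of $(t,v)$, and the contradiction proves that the minimizer is locally stabbed. When $t$, $t'$ and $v$ happen to be coplanar, (i) and (ii) are the classical facts underlying Lawson's flip algorithm: the part of $D_t$ on $v$'s side of the plane $\Pi$ through $pq$ perpendicular to $t$ is contained in $D_{t'}$, so $v\in D_t$ forces $v\in D_{t'}$, while $t'$ contains the segment $pq$ and reaches past $\Pi$ toward $v$, so its nearest point to $v$ is strictly nearer than that of $t$.

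The real work is to carry (i) over to the curved, genuinely three-dimensional situation, and this is where $\eps$-sampling theory and the uniformity hypothesis enter. Since $v$ stabs $D_t$ we have $\|v-p\|,\|v-q\|,\|v-r\|\le 2\rho(t)\le 2\eps\gamma(\Sigma)$, so $v$ and every triangle visited by the descent lie within a ball of radius $O(\eps\gamma(\Sigma))$ about a vertex of $t$; inside such a ball $\Sigma$ is a small-gradient graph over a tangent plane, the surface normals vary by $O(\eps)$, consistent orientation together with the small circumradius keeps every triangle normal within $O(\eps)$ of the surface normals at its vertices, and the triangles meeting the ball project injectively onto the tangent plane and form a genuine planar triangulation on which the two-dimensional argument can be replayed. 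The subtlety is that $p,q,s$ and the circumcenters of $t$ and $t'$ are only coplanar up to an $O(\eps^2\gamma(\Sigma))$ error and $\Pi$ is tilted by $O(\eps)$, so the planar inclusion holds only up to a curvature deficit that is again of order $\eps^2\gamma(\Sigma)$; to still conclude $v\in D_{t'}$ one must show this deficit is strictly dominated by the margin with which $v$ lies inside $D_t$. Here the uniformity condition does its work: together with $\rho(t)\le\eps\gamma(\Sigma)$ the hypothesis $\delta=2\sin 24\eps$ forces every angle of every triangle of $T$ into the range $(24\eps,\pi-24\eps)$ --- this is the inscribed-angle identity $\sin\alpha=|e|/(2\rho(t))$ combined with $|e|>\delta\eps\gamma(\Sigma)\ge 2\sin(24\eps)\,\rho(t)$ --- and it is this effective lower bound on angles, with the constant $24$ chosen exactly large enough, that lets the available margin beat the curvature deficit; the cap $\eps<\frac{\pi}{72}$ keeps $24\eps$ and the other small angles in the regime where these estimates are valid.

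I expect the main obstacle to be precisely this quantitative ``curved edge-flip lemma'': in the plane the inclusion of the part of $D_t$ facing $v$ into $D_{t'}$ is exact, but on $\Sigma$ one must track every way curvature could push $v$ back out of $D_{t'}$ and prove that the total such error stays below the margin guaranteed by the minimum-angle bound, and that computation is what pins down the specific constants $2\sin 24\eps$ and $\frac{\pi}{72}$. Once this lemma is in hand, the descent terminates because $T$ is finite and $d(v,\cdot)$ strictly decreases along it, and termination can occur only at a locally stabbed triangle, completing the ``only if'' direction and hence the theorem.
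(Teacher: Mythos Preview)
Your overall strategy---walk from the stabbed triangle across successive shared edges, at each step either discovering a locally stabbed triangle or passing to an adjacent triangle that $v$ also stabs, and terminate by a strictly decreasing potential---is exactly the paper's. Two points, one minor and one a genuine gap.

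The minor point: your case split ``if the minimizing $v$ is one of the three neighbor vertices of $t$'' is not the right dichotomy, and the ``i.e.\ $v$ projects strictly outside $t$'' that follows is a non sequitur (every stabbing vertex projects outside $t$, by the normal--angle estimate you invoke later; this has nothing to do with whether $v$ happens to be a neighbor). The dichotomy that actually drives the step is: either the neighbor vertex $s$ across the chosen edge $pq$ lies in $D_t$---so $t$ is locally stabbed and you stop---or $s\notin D_t$, in which case $D_{t'}\supset D_t$ on the $s$-side of the bisector and you can pass to $t'$.

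The genuine gap is your descent quantity. The claim $d(v,t')<d(v,t)$ can fail already in the coplanar model: if the nearest point of $t$ to $v$ is an endpoint $p$ of the shared edge and the angle $\angle vps$ is obtuse, then $p$ is also the nearest point of $t'$ to $v$ and the distance does not drop. The paper sidesteps this by taking the \emph{power distance} $\pow(v,D_t)=\|v-c_t\|^2-\rho(t)^2$ as the potential. The virtue of this choice is that the bisector $C_{pq}$ of $D_t$ and $D_{t'}$ is exactly the locus $\{\pow(\cdot,D_t)=\pow(\cdot,D_{t'})\}$, so the single geometric fact one must establish is that $v$ lies on the $s$-side of $C_{pq}$; once that is shown, both desired conclusions---$v\in D_{t'}$ and $\pow(v,D_{t'})<\pow(v,D_t)$---follow at once, with no separate nearest-point analysis. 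The aspect-ratio bound (equivalently, your angle bound $\alpha>24\eps$) enters precisely here: in three dimensions $C_{pq}$ is tilted from $H_{pq}$ by at most the dihedral angle $\le 14\eps$, and one must exclude the thin wedge between them; when the foot of $v$ on the line through $pq$ lands outside the segment, the obtuse angle $\angle vpq$ is bounded by $\frac{\pi}{2}+\arccos\frac{1}{2a}$, and the hypothesis $a<\frac{1}{2\sin 24\eps}$ is exactly what makes this bound beat the tilt. So the constants you identify are correct and your ``margin versus curvature deficit'' picture is morally right, but the clean way to cash it out is via power distance rather than metric distance to the triangle.
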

Notice that the bound on $\delta$ is $O(\eps)$. 
This implies that dense triangulations
require only mild restrictions on its uniformity.

We will flip edges of dense triangulations to make it Delaunay.
Suppose $pq$ is an edge in a surface triangulation $T$. Flipping
$pq$ means replacing two triangles, say $pqr$ and $pqs$,
incident to $pq$ in $T$ by the triangles $prs$ and $qrs$.
If the new triangulation is $T'$ we write $T\stackrel{pq}{\rightarrow}T'$.
A triangulation $T$ is {\em flipped} to a triangulation $T'$
if there is a sequence of edge flips so that
$T=T_0\stackrel{e_0}{\rightarrow}T_1\stackrel{e_1}{\rightarrow}\cdot\cdot
\stackrel{e_{k-1}}{\rightarrow} T_k=T'$.

\begin{definition}
An edge in a surface triangulation is called {\em flippable} if it is
incident to a locally stabbed triangle.
\end{definition}

One can devise an easy algorithm to convert a 
$(\eps,\delta)$-dense triangulation
to a Delaunay triangulation using Theorem~\ref{thm1}. Simply flip
any flippable edge existing in the triangulation. 
If one can prove that this flip algorithm terminates and
$(\eps,\delta)$-density is maintained after each flip, 
we will have an algorithm to flip an $(\eps,\delta)$-dense triangulation
to a triangulation that does not have any stabbed triangle.
This means each triangle in the new triangulation has its 
diametric ball that does not contain any vertices of $T$.
In other words, this new triangulation is Delaunay.
Actually, the ``Delaunayhood" is stronger since not only
does each triangle have an empty circumscribing ball but the
ball can be chosen to be its diametric ball. Extending the
notion of Gabriel graphs of a set of points in two dimensions,
we call such a triangulation Gabriel. 

\begin{theorem}
For $\delta=2\sin 24\eps$ and 
$\eps<\frac{\pi}{72}$, 
any $(\eps,\delta)$-dense triangulation 
can be flipped to a Gabriel triangulation.
\label{thm2}
\end{theorem}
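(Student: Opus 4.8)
The plan is to use Theorem~\ref{thm1} to reduce the problem to showing that the naive flip algorithm --- repeatedly flip any flippable edge --- (i) preserves $(\eps,\delta)$-density at every step and (ii) terminates. Granting these two facts, when the algorithm halts there is no flippable edge, hence by Definition of flippable no locally stabbed triangle, hence by Theorem~\ref{thm1} no stabbed triangle at all; that means every triangle's diametric ball is empty of vertices of $T$, which is exactly the Gabriel (hence Delaunay) condition, and we are done.

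For preservation of $(\eps,\delta)$-density, I would argue one flip at a time. Suppose we flip $pq$ because triangle $pqr$ (say) is locally stabbed by its neighbor $s$ across $pq$, replacing $pqr,pqs$ by $prs,qrs$. The separation condition ($\delta$-uniformity) is automatically inherited since the vertex set does not change. The work is to show the two new triangles still have circumradius $\le \eps\gamma(\Sigma)$ and that consistent orientation is maintained. Because $s$ stabs $D_{pqr}$, the four points $p,q,r,s$ all lie within a ball of radius $\rho(pqr)\le\eps\gamma(\Sigma)$, so $prs$ and $qrs$ lie in a ball of radius $\le\eps\gamma(\Sigma)$ and thus have circumradius bounded by $\eps\gamma(\Sigma)$ --- this part is essentially immediate. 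The orientation claim needs the standard $\eps$-sampling estimates: on a patch of radius $O(\eps)\gamma(\Sigma)$ the surface normals vary by $O(\eps)$, and the new triangles are small and ``close to tangent,'' so their normals still make an angle $<\tfrac{\pi}{2}$ with the surface normals at their vertices for $\eps<\tfrac{\pi}{72}$. That last inequality on $\eps$ is also where one must check that the local homeomorphism type is not destroyed by the flip --- i.e., that $\{prs,qrs\}$ together with the surrounding triangles still forms an embedded simplicial $2$-manifold triangulating $\Sigma$ --- which again follows from all four vertices lying in a small cap where $\Sigma$ looks like a graph over its tangent plane, so the flip is a legitimate retriangulation of a topological disk.

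For termination I would exhibit a potential that strictly decreases under every flip. The natural candidate, mirroring the two-dimensional case, is the (in-circle / power) functional: lift each vertex $v=(x,y,z)\in\mathbb{R}^3$ to a suitable height and sum a concave lower-envelope-type quantity over triangles, or more concretely use $\Phi(T)=\sum_{t\in T}\rho(t)$ or a sum of squared circumradii, and show that replacing a locally stabbed pair $pqr,pqs$ by $prs,qrs$ strictly decreases it. Intuitively, if $s$ stabs $D_{pqr}$ then (by symmetry of the in-ball test, which does hold here since all four points are in one small cap and the configuration is combinatorially like the planar case) $r$ stabs $D_{pqs}$ as well, and the flip is exactly the planar Lawson flip restricted to these four points, so any monotone quantity that works in the plane --- e.g. the standard ``each flip decreases the maximum-min-angle lexicographic vector'' or the lifted-paraboloid volume --- decreases here too. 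One subtlety is that the lifting used in the plane (onto the paraboloid) must be replaced by something intrinsic to the local tangent-plane picture, since globally the vertices sit on a curved surface; I expect the cleanest route is to observe that each flip is supported in a cap of radius $O(\eps)\gamma(\Sigma)$ on which, after projecting to the tangent plane at some base point, the in-ball tests agree with planar in-circle tests up to $O(\eps)$ distortion, and $\eps<\tfrac{\pi}{72}$ makes this distortion too small to reverse a strict inequality.

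The main obstacle I anticipate is termination, specifically ruling out that a flip which is ``good'' locally (eliminates one stabbed triangle) creates new stabbed triangles elsewhere and that this cascades forever. In the full-dimensional setting this is handled by the global paraboloid potential; here we lack a single global lift, so the crux is to manufacture a genuine global monotone potential out of the per-flip local (tangent-plane) comparisons. I would try to push through $\Phi(T)=\sum_t \rho(t)^2$ (or the sum over triangles of the ``excess'' $\rho(t)^2-$ (half-perimeter)$^2$, an intrinsically planar-flip-monotone quantity), proving via the $O(\eps)$-distortion estimate that each flip decreases it by at least a positive amount bounded below independent of the step; combined with the fact that the vertex set is fixed and finite (so $\Phi$ is bounded below), this forces termination. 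The consistent-orientation and homeomorphism-preservation checks are routine given the $\eps$-sampling machinery, so they are secondary; the real content is the potential argument.
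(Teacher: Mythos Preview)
Your overall plan---reduce to (i) preservation of $(\eps,\delta)$-density under a single flip and (ii) termination, then invoke Theorem~\ref{thm1}---is exactly the paper's strategy. The gap is in how you carry out (i) and (ii).

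Your argument that the new triangles have circumradius at most $\eps\gamma(\Sigma)$ is incorrect. You write that since $s$ stabs $D_{pqr}$, all four points $p,q,r,s$ lie in a ball of radius $\rho(pqr)\le\eps\gamma(\Sigma)$, ``so $prs$ and $qrs$ \ldots\ have circumradius bounded by $\eps\gamma(\Sigma)$.'' But three points inside a ball of radius $R$ can have circumradius arbitrarily larger than $R$ (take three nearly collinear points near the boundary). Containment in $D_{pqr}$ alone does not bound $\rho(prs)$ or $\rho(qrs)$, and the $\delta$-uniformity does not rescue this step either. The paper handles this with a genuinely different, and sharper, geometric fact (Lemma~\ref{single-flip}): using the bisector $C_{pq}$ of $D_{pqr}$ and $D_{pqs}$ and the symmetry of stabbing (Lemma~\ref{stab-both}), one shows that each of $D_{prs}$, $D_{qrs}$ is contained, on the half-space holding its own center, inside one of $D_{pqr}$, $D_{pqs}$; hence $\rho(prs),\rho(qrs)\le\max\{\rho(pqr),\rho(pqs)\}$. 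This is the lemma you are missing, and it is the crux of the whole proof.

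That same lemma is also what makes termination work, and your proposed potentials do not. From Lemma~\ref{single-flip} you only get that each new radius is (strictly) below the \emph{maximum} of the two old radii; you do not control either new radius against the smaller old one. So $\sum_t\rho(t)$ or $\sum_t\rho(t)^2$ can increase under a flip (take $\rho(pqr)$ large, $\rho(pqs)$ tiny, and both new radii just below $\rho(pqr)$). The paper instead orders the multiset of circumradii decreasingly and shows this sequence drops \emph{lexicographically} at every flip (Lemma~\ref{prop2}); with finitely many triangulations on a fixed vertex set, this forces termination. Your tangent-plane/paraboloid heuristic is not needed and would not yield a global monotone quantity anyway. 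Once Lemma~\ref{single-flip} is in hand, both density preservation (Corollary~\ref{density-cor}) and termination follow in a few lines, exactly as the paper does.
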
 

Again, notice that the condition on uniformity becomes relaxed with
increasing density.

\subsection{Dense triangulations}
It is natural to seek similar results for dense triangulations that
are not necessarily uniform.
It turns out that such
triangulations can be flipped to almost Delaunay triangulations
but not necessarily to Delaunay triangulations. To prove this
result we will need some generalizations of the concept of
stabbing as well as Delaunay triangulations.
We denote a ball $B=B(c,r)$ shrunk by $\alpha$ as $B^{\alpha}$,
that is, $B^{\alpha}=B(c,r-\alpha)$.
With this definition, $D^{\alpha}_t$ denotes the diametric
ball of $t$ shrunk by $\alpha$.

\begin{definition}
A surface triangulation $T$ is $\alpha$-Gabriel if for each triangle
$t\in T$, the shrunk diametric ball $D_t^{\alpha}$ contains
no vertex of $T$ inside.
\end{definition}

\begin{figure}[ht]
\begin{center}
\epsfig{file=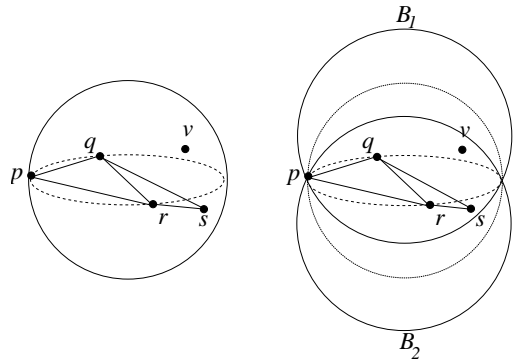,scale=0.3}
\end{center}
\caption{(left) : $pqr$ is stabbed by $v$ and is locally stabbed by $s$;
$pq$ is a flippable edge.
(right) : $B_1$ and $B_2$ are $(\beta)$- and $(-\beta)$-balls of $pqr$ which
is $\beta$-stabbed by $v$ and is locally $\beta$-stabbed by $s$; $pq$ is
a $\beta$-flippable edge.}
\label{stab}
\end{figure}

Let $\nor{t}$ denote the outward normal
of a triangle $t\in T$.
For a triangle $t\in T$ and $\beta\in \mathbb{R}$,
a $\beta$-ball of $t$ is a circumscribing ball of $t$ 
whose center is at $c+\beta \nor{t}$ where $c$ is the
circumcenter of $t$.
Observe that $0$-ball of $t$ is its diametric ball $D_t$. 
For any $\beta\not=0$, there are two balls of 
radius $\sqrt{\rho(t)^2+\beta^2}$,
one is $\beta$-ball and another is $(-\beta)$-ball of $t$. 
See Figure~\ref{stab}.

\begin{definition}
A triangle $t\in T$ is $\beta$-{\em stabbed}
if a vertex of $T$ stabs both $\beta$- and $(-\beta)$-balls
of $t$.
We say $t$ is {\em locally} $\beta$-stabbed
if the stabbing vertex is one of the three neighbor
vertices of $t$.
\end{definition}

Observe that if a triangle $t$ is not $\beta$-stabbed, the intersection
of its $\beta$- and $(-\beta)$-balls cannot contain any vertex of $T$.
This intersection contains the ball $D_t^{\alpha}$, 
$\alpha= \rho(t)+\beta-\sqrt{\rho(t)^2+\beta^2}$, which also cannot
contain any vertex of $T$. Observe that $\alpha\leq \beta$.

\begin{observation}
A surface triangulation $T$ is $\beta$-Gabriel
if it does not have any $\beta$-stabbed
triangle. 
\label{obs-gab}
\end{observation}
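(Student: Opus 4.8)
The plan is to unwind the definitions and reduce the claim to a short comparison of radii of nearly concentric balls. Fix an arbitrary triangle $t\in T$ with circumcenter $c$ and circumradius $\rho(t)$, and suppose $T$ has no $\beta$-stabbed triangle; we may assume $\beta\ge 0$, the case $\beta\le 0$ being symmetric since swapping $\beta\mapsto-\beta$ merely exchanges the $\beta$- and $(-\beta)$-balls. Since $t$ is not $\beta$-stabbed, no vertex of $T$ lies simultaneously inside the $\beta$-ball and the $(-\beta)$-ball of $t$; equivalently, the intersection $I_t$ of these two balls contains no vertex of $T$ in its interior. Hence it suffices to show that the shrunk diametric ball $D_t^{\beta}$ is contained in $I_t$.

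The key step is this inclusion. Both the $\beta$-ball and the $(-\beta)$-ball have radius $R:=\sqrt{\rho(t)^2+\beta^2}$, and their centers $c\pm\beta\nor{t}$ are at distance $\beta$ from $c$; therefore a ball $B(c,r)$ is contained in each of them precisely when $\beta+r\le R$, i.e.\ $r\le R-\beta$, so $B(c,R-\beta)\subseteq I_t$. Setting $\alpha:=\rho(t)+\beta-\sqrt{\rho(t)^2+\beta^2}$ gives $\rho(t)-\alpha=R-\beta$, hence $D_t^{\alpha}=B(c,R-\beta)\subseteq I_t$. A one-line algebra check gives $0\le\alpha\le\beta$ (nonnegativity from $(\rho(t)+\beta)^2\ge\rho(t)^2+\beta^2$, and $\alpha\le\beta$ from $\rho(t)\le R$), so $D_t^{\beta}\subseteq D_t^{\alpha}\subseteq I_t$. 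Consequently $D_t^{\beta}$ has no vertex of $T$ in its interior; as $t$ was arbitrary, $T$ is $\beta$-Gabriel.

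I do not expect a genuine obstacle here: the statement is essentially definitional, and the only real content is the containment $D_t^{\beta}\subseteq I_t$, which is the elementary ``lens inside two balls'' computation above. The one place to be careful is the sign bookkeeping — legitimizing the reduction to $\beta\ge 0$, and verifying $\alpha\le\beta$ so that shrinking the diametric ball by the full $\beta$ (rather than by the slightly smaller $\alpha$ that the geometry directly yields) is still safe.
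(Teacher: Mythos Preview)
Your argument is correct and follows exactly the paper's own justification (given in the paragraph immediately preceding the observation): the intersection of the $\beta$- and $(-\beta)$-balls contains $D_t^{\alpha}$ with $\alpha=\rho(t)+\beta-\sqrt{\rho(t)^2+\beta^2}\le\beta$, hence also $D_t^{\beta}$. One small caveat: your symmetry remark for $\beta\le 0$ is not quite right, since while the notion of $\beta$-stabbed is invariant under $\beta\mapsto-\beta$, the notion of $\beta$-Gabriel is not (shrinking by a negative amount enlarges the ball), so the observation is really only intended and used for $\beta\ge 0$.
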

 
We prove the following results. 
\begin{theorem}
For $\eps<0.1$, any $\eps$-dense triangulation of a surface
with reach $\gamma$ contains a $\beta$-stabbed
triangle only if it contains a locally $(\beta-88\eps^2\gamma)$-stabbed
triangle.
\label{thm3}
\end{theorem}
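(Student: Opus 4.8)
\medskip

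The plan is to show that if $t$ is $\beta$-stabbed, then among the chain of triangles whose diametric or $\pm\beta$-balls enclose the stabbing vertex $v$, one can walk toward $v$ and find a neighbor of $t$ that already $(\beta - 88\eps^2\gamma)$-stabs $t$. The key geometric input is that in an $\eps$-dense triangulation the surface, and hence all vertices near $t$, lie in a thin slab around the plane of $t$: by the sampling theory of \cite{AB98}, for $\eps < 0.1$ the normal of $\Sigma$ turns by $O(\eps)$ over a region of radius $O(\eps\gamma)$, and the consistent-orientation hypothesis forces $\nor{t}$ to be within $O(\eps)$ of the surface normals at the vertices of $t$. Consequently any vertex $w$ within distance $O(\eps\gamma)$ of $t$ lies within distance $O(\eps^2\gamma)$ of the plane $h_t$ spanned by $t$; the $88\eps^2\gamma$ in the statement is exactly the accumulated slack from these $O(\eps^2\gamma)$ estimates. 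First I would make this ``flatness'' precise: bound the distance from $h_t$ to any vertex of a triangle incident to $t$, and bound how far the orthogonal projection $\pi_{h_t}(v)$ of the stabbing vertex is from $t$ when $v$ stabs both $\beta$- and $(-\beta)$-balls of $t$.

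\medskip

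Next I would set up the walk. Project everything to $h_t$. A vertex $v$ that $\beta$-stabs $t$ lies, after projection, inside (a slight shrink of) $D_t$, so $\pi_{h_t}(v)$ is inside the circumcircle of $t$ in the plane. Since $|T|$ is a $2$-manifold triangulating $\Sigma$, the triangles incident to the edges and vertices of $t$, once projected, cover a neighborhood of $t$ in $h_t$; walk from (the projection of) the centroid of $t$ in a straight line toward $\pi_{h_t}(v)$, crossing one edge of $t$, into a neighbor triangle $t' = pqs$. I claim $s$ is the desired locally-stabbing neighbor. To prove it, I would use the in-plane circle geometry: because $s$ is a point outside (or on) the circumcircle of $t$ but the segment from $t$ toward $v$ leaves $t$ through edge $pq$, the diametric ball of $t$ shrunk appropriately still contains $s$ whenever it contains $v$ — this is the classical ``flip improves the Delaunay situation'' argument, here carried out in the projected plane and then lifted back to $\mathbb{R}^3$, paying $O(\eps^2\gamma)$ for the lift. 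Iterating the walk a bounded number of times (the projected triangles have edge lengths $\Theta(\eps\gamma)$ or less and the circumradius of $t$ is $\le \eps\gamma$, so only $O(1)$ steps are possible before reaching a neighbor of $t$ that stabs) gives a neighbor of $t$ stabbing the relevant shrunk balls, i.e. a locally $(\beta - 88\eps^2\gamma)$-stabbed triangle.

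\medskip

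The main obstacle, I expect, is the lifting step: a vertex $s$ whose projection lies strictly inside the circumcircle of $t$ need not lie inside $D_t$ itself, and worse, it must be shown to lie inside both the shrunk $\beta$- and $(-\beta)$-balls simultaneously. Controlling this requires quantitative control of two things at once — how far $s$ is from $h_t$ (the $O(\eps^2\gamma)$ flatness bound) and how the radius $\sqrt{\rho(t)^2+\beta^2}$ of the $\pm\beta$-balls compares to $\rho(t)$ — and then showing the net effect is a loss of at most $88\eps^2\gamma$ in the shrink parameter, uniformly over all configurations with $\rho(t) \le \eps\gamma$ and all admissible $\beta$. I would handle it by writing the squared distance from $s$ to each of the two ball centers $c \pm \beta\nor{t}$, decomposing into the in-plane component (where $\pi_{h_t}(s)$ inside the circumcircle gives the main inequality) and the normal component (bounded by the flatness estimate plus $|\beta|$), and then tracking constants. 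The choice $\eps < 0.1$ is presumably what keeps all the Taylor-type estimates from the sampling theory valid with explicit constants, so the whole argument reduces to a (tedious but routine) bookkeeping of $O(\eps^2\gamma)$ terms once the walk and the in-plane flip inequality are in place.
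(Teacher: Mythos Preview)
Your proposal has a genuine gap in what it sets out to prove. You aim to show that the original $\beta$-stabbed triangle $t$ is itself locally $(\beta-88\eps^2\gamma)$-stabbed by one of its three neighbor vertices; the theorem only asserts that \emph{some} triangle of $T$ is locally $(\beta-88\eps^2\gamma)$-stabbed, and in general it will not be $t$. Your central claim --- ``because $s$ is a point outside (or on) the circumcircle of $t$ \ldots\ the diametric ball of $t$ shrunk appropriately still contains $s$'' --- is self-contradictory: if $s$ lies outside $D_t$ then no shrink of $D_t$ contains it. The classical planar flip argument runs the other way: when $s\notin D_{pqr}$ and $v\in D_{pqr}$ with $v,s$ on the same side of $pq$, one concludes $v\in D_{pqs}$ and passes to the neighboring triangle $t'=pqs$, iterating with $t'$ rather than returning to $t$. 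Your walk also tacitly assumes uniformity (``edge lengths $\Theta(\eps\gamma)$'') to bound the number of steps; Theorem~\ref{thm3} is precisely the non-uniform case, so arbitrarily thin triangles may be crossed and no $O(1)$ bound on steps is available.

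The paper follows the pass-to-the-neighbor paradigm but with a device that sidesteps the lifting problem you flagged. Rather than projecting to $h_t$ and lifting back, it works directly with circumscribing balls of the successive triangles whose centers are constrained to lie on the fixed plane $\pi_t^{\beta}$ (Lemma~\ref{alpha-stab}): for consecutive triangles $t_i,t_{i+1}$ sharing edge $e_i$, the bisector of these two balls is exactly the plane through $e_i$ orthogonal to $\pi_t^{\beta}$, so $\pow(w,\cdot)$ is monotone along the walk and the process terminates at some $t'$ whose neighbor vertex $v$ stabs the ball $B$ circumscribing $t'$ with center on $\pi_t^{\beta}$; the same $t'$ works for $\pi_t^{-\beta}$ because the separating planes are identical. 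The flatness estimate you correctly identified then enters only once, at the end: the circumcenter $c'$ of $t'$ lies within $\norm{c-c'}\sin\angle(\nor{t},\nor{t'})\le 4\eps\gamma\sin 22\eps\le 88\eps^2\gamma$ of the plane of $t$, so $B,B'$ contain the $\pm(\beta-88\eps^2\gamma)$-balls of $t'$, and $t'$ --- not $t$ --- is the locally $(\beta-88\eps^2\gamma)$-stabbed triangle.
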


Choosing $\beta=88\eps^2\gamma$ we conclude that there is
a $88\eps^2\gamma$-stabbed triangle only if there is a locally
stabbed triangle. Therefore, if one gets rid of all
locally stabbed triangles, there cannot be any $88\eps^2\gamma$-stabbed
triangles. In other words, the triangulation becomes $88\eps^2\gamma$-Gabriel
by Observation~\ref{obs-gab}.

\begin{theorem}
Any $\eps$-dense triangulation of a surface
with reach $\gamma$ can be flipped to a $88\eps^2\gamma$-Gabriel triangulation
if $\eps<0.1$.
\label{thm4}
\end{theorem}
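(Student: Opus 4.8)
The plan is to mirror the structure that presumably underlies Theorem~\ref{thm2}, but now using the ``$\beta$-stabbed'' machinery of Theorem~\ref{thm3} in place of exact stabbing, and without the uniformity hypothesis. The algorithm is again the obvious one: as long as there exists a flippable edge (one incident to a locally stabbed triangle, in the sense of the unmodified $0$-ball), flip it. To establish the theorem I must show three things: (i) $\eps$-density is preserved by each flip, so Theorem~\ref{thm3} continues to apply after every flip; (ii) the flip process terminates; and (iii) when it terminates, the triangulation is $88\eps^2\gamma$-Gabriel.

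For (iii), termination of the algorithm means that no triangle is locally stabbed, i.e. no triangle is locally $0$-stabbed. Taking $\beta = 88\eps^2\gamma$ in Theorem~\ref{thm3}, a $88\eps^2\gamma$-stabbed triangle would force a locally $(88\eps^2\gamma - 88\eps^2\gamma) = $ locally $0$-stabbed triangle, a contradiction. Hence the output has no $88\eps^2\gamma$-stabbed triangle, and by Observation~\ref{obs-gab} it is $88\eps^2\gamma$-Gabriel. For (i), I would argue that a locally stabbed triangle together with its stabbing neighbor vertex spans a small region of $\Sigma$ (circumradii at most $\eps\gamma$, vertices within $O(\eps)\gamma$ of one another), so standard normal-variation and flatness estimates from $\eps$-sampling theory show the two replacement triangles $prs$ and $qrs$ are again small (circumradius $\le \eps\gamma$) and correctly oriented relative to the normals of $\Sigma$ at their vertices; I also need that the local flip yields a valid simplicial $2$-manifold (the standard link condition for the flipped edge), which again holds because the four points lie in a nearly flat patch in convex position after projection onto a common tangent plane. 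This is the same kind of local computation that Theorem~\ref{thm1}/\ref{thm3} already rely on.

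The main obstacle is (ii), termination, since unlike the planar case there is no global potential (lifted paraboloid, flip graph connectivity, lexicographic Delaunay order) that obviously decreases: surface flips need not be ``Delaunay flips'' in the classical sense, and a flip that removes one locally stabbed triangle may create another. The route I would take is to exhibit a potential function that strictly decreases with every flip. A natural candidate is a weighted count of stabbings: for instance $\Phi(T) = \sum_{t \in T} (\text{number of vertices of } T \text{ stabbing } D_t)$, or a variant counting only ``local'' incidences, and then show that flipping an edge incident to a locally stabbed triangle strictly decreases $\Phi$. The crux is a local-geometry lemma: when $pqr$ is locally stabbed by its neighbor $s$ and we replace $\{pqr,pqs\}$ by $\{prs,qrs\}$, the new diametric balls $D_{prs}, D_{qrs}$ are contained in $D_{pqr}\cup D_{pqs}$ (or in a controlled enlargement thereof), so no vertex gains a new stabbing while $s$ loses its stabbing of $pqr$; in the flat limit this is exactly the classical in-circle argument, and for $\eps<0.1$ a perturbation estimate keeps it valid. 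If a clean monotone $\Phi$ proves elusive, the fallback is to bound the total number of flips directly by charging each flip to a pair (vertex, triangle) that can be ``resolved'' at most once within a small neighborhood, again using the density bound to make neighborhoods finite; but I expect the potential-function argument to be the cleanest and it is the step I would spend the most effort making rigorous.
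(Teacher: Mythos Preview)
Your parts (i) and (iii) are correct and match the paper's intent. The real divergence is in (ii), termination, and here you are working much harder than necessary and in a direction that is not clearly sound.

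The paper has already handled both (i) and (ii) in Section~3, \emph{without} uniformity. Lemma~\ref{single-flip} shows that whenever $pq$ is flippable (so $pqr$ is locally stabbed) and the dihedral angle at $pq$ is below $\pi/2$, each of the new triangles $prs$, $qrs$ has circumradius at most $\max\{\rho(pqr),\rho(pqs)\}$. The dihedral bound comes from Corollary~\ref{dihedral}, which uses only $\eps$-density. From this, Corollary~\ref{density-cor} gives preservation of $\eps$-density, and Lemma~\ref{prop2} gives termination: the sorted sequence of circumradii decreases lexicographically at every flip, so no triangulation repeats. The actual proof of Theorem~\ref{thm4} in the paper is therefore two lines: run {\sf MeshFlip}; by Lemma~\ref{prop2} it terminates; by Theorem~\ref{thm3} with $\beta=88\eps^2\gamma$ the output has no $88\eps^2\gamma$-stabbed triangle; by Observation~\ref{obs-gab} it is $88\eps^2\gamma$-Gabriel.

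Your proposed potential $\Phi(T)=\sum_t \#\{v : v \text{ stabs } D_t\}$ is not obviously monotone under a flip, even granting the containment $D_{prs},D_{qrs}\subset D_{pqr}\cup D_{pqs}$. A vertex $v$ that stabbed only $D_{pqr}$ before (contributing $1$) can lie in $D_{pqr}\cap D_{prs}\cap D_{qrs}$ after the flip (contributing $2$), so the net change need not be negative; the planar Delaunay argument that does work is not a stabbing count but the lifting/angle argument, whose surface analogue is precisely the circumradius comparison of Lemma~\ref{single-flip}. So the missing idea is simply that circumradii already furnish the decreasing potential, and you should invoke Lemma~\ref{single-flip}--Lemma~\ref{prop2} rather than invent a new $\Phi$.
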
 

\subsection{Background results}
The following well known results on normal approximations will be useful in our
analysis. Starting with work of
Amenta and Bern~\cite{AB98}, several versions of these results have been
proved. We pick appropriate ones for our purpose.
Let $\nor{x}$ 
denote the outward unit normal of $\Sigma$ at a point $x\in \Sigma$.  

\begin{lemma}[\cite{CDES01,AD07}]
For any two points $x$ and $y$ in $\Sigma$ such that 
$\|x-y\|\leq \eps \gamma$ for some $\eps \leq \frac{1}{3}$,
$\angle{\nor{x},\nor{y}} \leq \frac{\eps}{1-\eps}$ and
$\angle{\nor{x},(y-x)} \geq \arccos(\frac{\eps}{2})$.
\label{normlem}
\end{lemma}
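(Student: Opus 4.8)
Both inequalities rest on the ball property of a surface of reach $\gamma$: at every $z\in\Sigma$ the open balls $B(z+\gamma\nor{z},\gamma)$ and $B(z-\gamma\nor{z},\gamma)$ meet $\Sigma$ only at $z$. I would dispose of the chord bound directly. Set $\ell:=\|x-y\|\le\eps\gamma$. Since $y\notin B(x+\gamma\nor{x},\gamma)$ we have $\|y-x-\gamma\nor{x}\|^2\ge\gamma^2$, i.e. $\inner{y-x,\nor{x}}\le \ell^2/(2\gamma)$; applying the same computation to $y\notin B(x-\gamma\nor{x},\gamma)$ gives $\inner{y-x,\nor{x}}\ge -\ell^2/(2\gamma)$. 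Hence $\bigl|\cos\angle{\nor{x},(y-x)}\bigr|=|\inner{y-x,\nor{x}}|/\ell\le \ell/(2\gamma)\le\eps/2$, so $\angle{\nor{x},(y-x)}\ge\arccos(\eps/2)$.

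For the normal bound the plan is to reduce to a local graph. Because $\eps\le\frac13$, the ball property confines $\Sigma\cap B(x,\eps\gamma)$ to a thin lens about the tangent plane $\Pi_x$ at $x$ and, within a fixed fraction of $\gamma$ around $x$, lets one write the component of $\Sigma$ through $x$ as the graph $z=h(u)$ of a $C^1$ function over a disk in $\Pi_x$, with $h(x)=0$ and $\nabla h(x)=0$. Writing $y=(u,h(u))$ with $\|u\|\le\ell$, one has $\angle{\nor{x},\nor{y}}=\arctan\|\nabla h(u)\|$, so (using $\arctan t\le t$) it suffices to prove $\|\nabla h(u)\|\le\eps/(1-\eps)$.

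To control $\|\nabla h(u)\|$ I would invoke the ball property at the moving point $y=(u,h(u))$, not merely at $x$: every other surface point $(v,h(v))$ lies outside $B(y\pm\gamma\nor{y},\gamma)$, which after projecting the displacement onto $\nor{y}$ yields $\bigl|h(v)-h(u)-\nabla h(u)\!\cdot\!(v-u)\bigr|\le \tfrac{1}{2\gamma}\sqrt{1+\|\nabla h(u)\|^2}\,\bigl(\|v-u\|^2+|h(v)-h(u)|^2\bigr)$. Since this holds at every $u$ in the graph region, $h$ is $C^{1,1}$ and $\nabla h$ is Lipschitz with constant essentially $1/\gamma$; a short bootstrap first shows $\|\nabla h\|\le\eps<1$ on the region, keeping the $\sqrt{1+\|\nabla h\|^2}$ correction near $1$. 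Integrating this Lipschitz bound out from $x$, where $\nabla h$ vanishes, gives $\|\nabla h(u)\|\le \|u\|/(\gamma-\|u\|)\le\eps/(1-\eps)$, as required.

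The routine parts are the constant bookkeeping in the final integration and the standard fact that a second-order remainder estimate of the above form upgrades to a Lipschitz bound on the gradient. The one genuinely delicate point is the reduction itself: the empty balls at $x$ alone do \emph{not} pin down $\nor{y}$, since two points of a surface of reach $\gamma$ lying within $\eps\gamma$ can a priori have unrelated normals unless one also exploits that $\Sigma$ is connected near $x$ and that the empty-ball condition holds simultaneously at the whole family of intermediate points --- which is precisely why the sharp forms of this estimate were refined through \cite{AB98,CDES01,AD07}.
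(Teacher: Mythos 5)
The paper does not actually prove Lemma~\ref{normlem}; it is imported verbatim as a citation to \cite{CDES01,AD07}, so there is no in-text proof to compare against. Your proposal therefore has to be judged on its own. The chord bound is correct and is exactly the standard one-line argument from the two empty balls of radius~$\gamma$ tangent at $x$: expanding $\|y-x\mp\gamma\nor{x}\|^2\ge\gamma^2$ gives $|\inner{y-x,\nor{x}}|\le\|y-x\|^2/(2\gamma)$, and dividing by $\|y-x\|$ yields $|\cos\angle{\nor{x},(y-x)}|\le\eps/2$.

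For the normal bound your route --- write $\Sigma$ locally as a graph $z=h(u)$ over the tangent plane at $x$, convert the empty-ball condition at the \emph{moving} point $(u,h(u))$ into the stated second-order remainder estimate, conclude $\nabla h$ is Lipschitz with constant on the order of $(1+\|\nabla h\|^2)^{3/2}/\gamma$, and integrate outward from $\nabla h(0)=0$ --- is sound and the arithmetic does close: the Gronwall-type integration gives $\sin\bigl(\arctan\|\nabla h(u)\|\bigr)\le\|u\|/\gamma$, hence $\|\nabla h(u)\|\le\|u\|/\sqrt{\gamma^2-\|u\|^2}\le\|u\|/(\gamma-\|u\|)\le\eps/(1-\eps)$, and $\angle{\nor{x},\nor{y}}=\arctan\|\nabla h(u)\|\le\|\nabla h(u)\|$. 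The one step you flag as delicate, the reduction to a $C^1$ graph on a disk of radius a fixed fraction of~$\gamma$, is indeed the step that needs a separate standard fact about positive-reach sets, and the bootstrap must be phrased carefully so the pointwise (not merely global) Lipschitz constant feeds the integration --- both are routine but not free. By contrast the cited sources obtain the same $\eps/(1-\eps)$ bound by direct medial-ball geometry (comparing the tangent balls at $x$ and $y$ on opposite sides of $\Sigma$) without passing through the graph representation or a gradient ODE; that route is shorter and avoids the regularity bookkeeping you need, at the price of being less transparent about where the sharp constant comes from. Your proposal is correct in outline; it is simply a more analytic and slightly heavier derivation of the same estimate than the ones in the literature the paper points to.
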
 

Following lemma is an oriented version of a result in~\cite{Dey06}.
Here we use the fact that $T$ has a consistent orientation with 
$\Sigma$. Interestingly, this property does not automatically
follow from triangles being small and vertex set being dense.
\begin{lemma}[\cite{Dey06}]
Let $pqr$ be a triangle in a $\eps$-dense triangulation of a surface.
Assume that $p$ subtends a maximal angle in $pqr$.  Then,
for $\eps<\frac{1}{\sqrt 2}$,
$\angle{\nor{pqr},\nor{p}} \leq \arcsin \varepsilon + \arcsin (\frac{2}{\sqrt
3} (\sin (2\arcsin \varepsilon))$.
\label{trinorm}
\end{lemma}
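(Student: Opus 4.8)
The plan is to work in the orthogonal splitting $\real^3=T_p\Sigma\oplus\real\nor{p}$, where $T_p\Sigma$ is the tangent plane of $\Sigma$ at $p$, and to compare the outward triangle normal $\nor{pqr}$ (which is parallel to $(q-p)\times(r-p)$) with $\nor{p}$ by bounding the two components of this cross product separately. Since $T$ is $\eps$-dense we have $\rho(pqr)\le\eps\gamma$, so every edge of $pqr$ has length at most $2\rho(pqr)\le 2\eps\gamma$; hence Lemma~\ref{normlem} applies to each pair of vertices, giving $\angle{\nor{p},\nor{q}}$, $\angle{\nor{p},\nor{r}}=O(\eps)$. Moreover the two medial balls $B(p+\gamma\nor{p},\gamma)$ and $B(p-\gamma\nor{p},\gamma)$ contain no point of $\Sigma$, and writing that $q$ and $r$ lie outside them yields the sharp second‑order bounds $|\inner{q-p,\nor{p}}|\le\|q-p\|^2/(2\gamma)$ and $|\inner{r-p,\nor{p}}|\le\|r-p\|^2/(2\gamma)$; in particular each edge at $p$ makes angle at most $\arcsin\eps$ with $T_p\Sigma$. (The consistent orientation of $T$ with $\Sigma$ is used only to know that $\nor{pqr}$ lies on the same side of $T_p\Sigma$ as $\nor{p}$, so that we are bounding $\angle{\nor{pqr},\nor{p}}$ itself and not its supplement.)

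Write $a,b$ for the orthogonal projections of $q-p,r-p$ onto $T_p\Sigma$ and $\alpha=\inner{q-p,\nor{p}}$, $\beta=\inner{r-p,\nor{p}}$, so that $(q-p)\times(r-p)=a\times b+\alpha(\nor{p}\times b)-\beta(\nor{p}\times a)$. The $\nor{p}$‑component is $a\times b$, of norm $\|a\|\,\|b\|\sin\angle{a,b}$, and the $T_p\Sigma$‑component has norm at most $|\alpha|\,\|b\|+|\beta|\,\|a\|$, so
\[
\tan\angle{\nor{pqr},\nor{p}}\ \le\ \frac{1}{\sin\angle{a,b}}\Big(\tfrac{|\alpha|}{\|a\|}+\tfrac{|\beta|}{\|b\|}\Big).
\]
From $|\alpha|\le\|q-p\|^2/(2\gamma)$ and $\|a\|=\|q-p\|\cos\phi_q$ with $\sin\phi_q=|\alpha|/\|q-p\|\le\eps$ one gets $|\alpha|/\|a\|\le\|a\|/(2\gamma(1-\eps^2))$, and likewise for $\beta$, so the bracket is at most $(\|a\|+\|b\|)/(2\gamma(1-\eps^2))$. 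For the other factor, observe that $\angle{a,b}$ is the angle at $p$ of the planar triangle $\{p,p{+}a,p{+}b\}$, whose circumradius $\rho'$ satisfies $\rho'\le\rho(pqr)/\sqrt{1-\eps^2}\le\eps\gamma/\sqrt{1-\eps^2}$ (orthogonal projection onto a plane at angle $\le\arcsin\eps$ to the triangle's plane distorts the circumradius only by that factor). The law of sines gives $\|a-b\|=2\rho'\sin\angle{a,b}$, and combined with the elementary inequality $\|a-b\|\ge(\|a\|+\|b\|)\sin(\angle{a,b}/2)$ this yields $\|a\|+\|b\|\le 4\rho'\cos(\angle{a,b}/2)$, hence $(\|a\|+\|b\|)/\sin\angle{a,b}\le 2\rho'/\sin(\angle{a,b}/2)$.

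Combining the two estimates,
\[
\tan\angle{\nor{pqr},\nor{p}}\ \le\ \frac{\rho'}{\gamma(1-\eps^2)\,\sin(\angle{a,b}/2)}\ \le\ \frac{\eps}{(1-\eps^2)^{3/2}\,\sin(\angle{a,b}/2)} .
\]
Finally, since $p$ subtends the maximal angle of $pqr$ that angle is $\ge\pi/3$, and projecting onto $T_p\Sigma$ moves it by at most $2\arcsin\eps$, so $\angle{a,b}\ge\pi/3-2\arcsin\eps$ and $\sin(\angle{a,b}/2)\ge\sin(\pi/6-\arcsin\eps)$. Substituting, passing from $\arctan$ to $\arcsin$, extracting the first‑order edge‑tilt contribution (the free additive $\arcsin\eps$ term), and recognising $1/\sin(\pi/6)=2/\sqrt3$ — that is, doing the fiddly but routine sharpening of constants, including a slightly more careful bound on how much projection distorts the angle for triangles of bounded circumradius so as to cover the full range — gives $\angle{\nor{pqr},\nor{p}}\le\arcsin\eps+\arcsin\!\big(\tfrac{2}{\sqrt3}\sin(2\arcsin\eps)\big)$ for $\eps<\tfrac1{\sqrt2}$.

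The step I expect to be the main obstacle is the lower bound on $\sin(\angle{a,b}/2)$ — equivalently, ruling out that $pqr$ is a ``tilted sliver'' whose plane is spanned by a near‑tangent direction together with $\nor{p}$. This is the one place where all three hypotheses genuinely cooperate: the maximal‑angle condition lower‑bounds $\angle{a,b}$; $\eps$‑density via $\rho'\le\eps\gamma$ is exactly what tames the degenerate regime $\angle{a,b}\to\pi$ (there $\sin\angle{a,b}\to0$ but $\rho'$ stays bounded, which is why the estimate must be routed through $\rho'$); and the reach supplies the quadratic bounds on $\alpha,\beta$, so the out‑of‑plane displacements of $q,r$ are truly second order. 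Turning the informal statement ``a triangle of bounded circumradius cannot have its three surface vertices nearly collinear, and the reach forbids its plane from climbing steeply out of $T_p\Sigma$'' into the stated closed form with the constant $2/\sqrt3$ is the real content of the lemma.
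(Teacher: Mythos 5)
Your route is genuinely different from the one behind the cited Lemma~\ref{trinorm} (which the paper does not reproduce; it is the classical normal lemma of Amenta, Choi, Dey and Leekha, also in~\cite{Dey06}). The standard argument is geometric: one rotates the plane of $pqr$ about an edge until it touches a medial ball, and $\angle{\nor{pqr},\nor{p}}$ is split into the rotation angle plus the tilt of the resulting tangent plane --- these two contributions are exactly the two $\arcsin$ summands in the stated bound. You instead estimate $\tan\angle{\nor{pqr},\nor{p}}$ directly through the cross-product decomposition in $T_p\Sigma\oplus\real\nor{p}$. That decomposition, the reach bound $|\inner{q-p,\nor{p}}|\le\norm{q-p}^2/(2\gamma)$, the law-of-sines step $(\norm{a}+\norm{b})/\sin\angle{a,b}\le 2\rho'/\sin(\angle{a,b}/2)$, and the spherical triangle inequality giving $\angle{a,b}\ge\pi/3-2\arcsin\eps$ are all correct.

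There are, however, two real gaps. First, the inequality $\rho'\le\rho(pqr)/\sqrt{1-\eps^2}$ is justified by saying that the orthogonal projection onto a plane at angle at most $\arcsin\eps$ to the triangle's plane distorts the circumradius by that factor; but the angle between the plane of $pqr$ and $T_p\Sigma$ is exactly $\angle{\nor{pqr},\nor{p}}$, the quantity you are trying to bound, and nothing established so far caps it at $\arcsin\eps$ --- that is not implied by the two edges at $p$ each making angle at most $\arcsin\eps$ with $T_p\Sigma$, which is precisely the ``tilted sliver'' danger you yourself identify. So as written the estimate is circular. It is salvageable: drop $\rho'$ and use $\norm{a}+\norm{b}\le\norm{q-p}+\norm{r-p}\le 4\eps\gamma$, giving the slightly weaker but non-circular bound $\tan\angle{\nor{pqr},\nor{p}}\le 2\eps/\bigl((1-\eps^2)\sin\angle{a,b}\bigr)$. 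Second, the last step --- ``passing from $\arctan$ to $\arcsin$, extracting the additive $\arcsin\eps$, and recognising $2/\sqrt3$'' --- is not a routine constant chase: your bound is a single $\arctan$ of a different algebraic shape, and showing it dominates $\arcsin\eps+\arcsin(\tfrac{2}{\sqrt3}\sin(2\arcsin\eps))$ is an unproved inequality in its own right. Numerically it holds on $(0,1/2)$, so the conclusion is plausible there (and the paper only invokes the lemma at $\eps<0.1$), but your denominator $\sin(\pi/6-\arcsin\eps)$ (or $\sin(\pi/3-2\arcsin\eps)$ after the fix) vanishes at $\eps=1/2$, so the argument gives nothing on the stated range $[1/2,1/\sqrt2)$; the ``slightly more careful bound'' you defer to there is not slight --- it is where the classical geometric proof does its work.
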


Combining Lemmas~\ref{normlem} and \ref{trinorm} one obtains
the following corollary.
 
\begin{corollary}
Let $q$ be any vertex in a triangle $pqr\in T$ where $T$ is a
$\eps$-dense triangulation of a surface.
Then, for $\varepsilon<0.1$, $\angle
{\nor{pqr},\nor{q}} \leq 7\varepsilon$.
\label{trinorm-cor}
\end{corollary}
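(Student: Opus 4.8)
The plan is to chain the two preceding results via the triangle inequality on angles between unit vectors. Let $p$ be the vertex of $pqr$ subtending the maximal angle; note that $q$ may or may not equal $p$. First I would bound $\angle{\nor{pqr},\nor{p}}$ directly by Lemma~\ref{trinorm}: since $\eps<0.1<\tfrac{1}{\sqrt2}$, we get $\angle{\nor{pqr},\nor{p}}\leq \arcsin\eps+\arcsin\bigl(\tfrac{2}{\sqrt3}\sin(2\arcsin\eps)\bigr)$. Using the elementary estimates $\arcsin x \leq x + O(x^3)$ and $\sin x \leq x$ valid on the relevant small range, this expression is at most roughly $\eps + \tfrac{4}{\sqrt3}\eps + (\text{cubic terms})$, which for $\eps<0.1$ is comfortably below $7\eps$ once one accounts for the slack; I would carry out this one-variable monotonicity check numerically at $\eps=0.1$ to pin down the constant.

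Next, if $q\neq p$, I would control $\angle{\nor{p},\nor{q}}$ using Lemma~\ref{normlem}: since $pqr$ lies in an $\eps$-dense triangulation, $\rho(pqr)\leq\eps\gamma$, hence $\|p-q\|\leq 2\rho(pqr)\leq 2\eps\gamma$. Applying Lemma~\ref{normlem} with the sampling parameter $2\eps$ (legitimate as $2\eps\leq\tfrac13$ when $\eps<0.1$ — actually one needs $2\eps \le 1/3$, i.e. $\eps \le 1/6$, which holds) gives $\angle{\nor{p},\nor{q}}\leq \tfrac{2\eps}{1-2\eps}$, which is at most about $2.5\eps$ for $\eps<0.1$. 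Then by the triangle inequality for the angular (geodesic) metric on the unit sphere,
\[
\angle{\nor{pqr},\nor{q}} \;\leq\; \angle{\nor{pqr},\nor{p}} + \angle{\nor{p},\nor{q}} \;\leq\; \Bigl(\arcsin\eps+\arcsin\bigl(\tfrac{2}{\sqrt3}\sin(2\arcsin\eps)\bigr)\Bigr) + \tfrac{2\eps}{1-2\eps}.
\]
It remains to verify this sum is $\leq 7\eps$ for all $\eps<0.1$; since every term is increasing in $\eps$ on this range, it suffices to check the inequality at $\eps=0.1$ (or to use the linear bounds $\arcsin\eps\le 1.01\eps$, $\tfrac{2}{\sqrt3}\sin(2\arcsin\eps)\le 1.21\eps$, $\tfrac{2\eps}{1-2\eps}\le 2.5\eps$, summing to well under $7\eps$). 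If $q=p$ the first bound alone already gives the claim with room to spare.

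The only mild obstacle is the bookkeeping of the small-angle estimates: one must make sure the linearizations of $\arcsin$ and $\sin$ are applied on intervals where they are valid and that the accumulated constant stays below $7$. This is routine one-variable calculus with no geometric subtlety, so I expect the argument above to go through directly.
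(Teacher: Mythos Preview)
Your proposal is correct and is exactly the approach the paper intends: the corollary is asserted as an immediate consequence of combining Lemmas~\ref{normlem} and~\ref{trinorm}, and you have spelled out that combination via the triangle inequality on angles. (One minor arithmetic slip: the constant $1.21$ in your linearization of $\tfrac{2}{\sqrt3}\sin(2\arcsin\eps)$ should be about $2.31$, but your direct evaluation at $\eps=0.1$ still yields roughly $0.58<0.7=7\eps$, so the conclusion is unaffected.)
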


Define the {\em dihedral angle} between two adjacent triangles $pqr$ and
$qrs$ as the angle between their oriented normals, that is,
$\angle{\nor{pqr},\nor{qrs}}$.  An immediate result from
Corollary~\ref{trinorm-cor} is that the dihedral angle between adjacent
triangles in a dense triangulation is small. 

\begin{corollary}
Let $pqr$ and $qrs$ be two triangles in a $\eps$-dense triangulation
of a surface. 
Then, for $\varepsilon<0.1$, $\angle {\nor{pqr},\nor{qrs}} \leq 14\eps$.
\label{dihedral}
\end{corollary}

\section{Flip algorithm}
Our flipping algorithm is very simple.
Continue flipping as long as there is a flippable edge, that is,
an edge incident to a stabbed triangle.\\ 

\noindent
{\sf MeshFlip(T)}
\begin{enumerate}
\item If there is a flippable edge $e\in T$ then flip $e$
else output $T$;
\item $T:=T'$ where $T\stackrel{e}{\rightarrow}T'$; go to step 1.
\end{enumerate}

The first issue to be settled is the termination of {\sf MeshFlip}.
It turns out that this simple flip algorithm terminates if $T$
is a $\eps$-dense triangulation of a surface for 
$\eps<0.1$. 

For convenience we introduce the notion of {\em bisectors} using
power distance. The power distance $\pow(B,x)$ of a point
$x\in\mathbb{R}^3$ to a ball $B=B(c,r)$ is $\|c-x\|^2-r^2$.
For two balls $B_1$ and $B_2$ in $\mathbb{R}^3$, the bisector
$C(B_1,B_2)$ is the plane containing points with equal weighted
distances to $B_1$ and $B_2$.
If $B_1$ and $B_2$ intersect, the bisector $C(B_1,B_2)$
is the plane containing the circle where the boundaries
of $B_1$ and $B_2$ intersect.
For two triangles $pqr$ and $pqs$
sharing an edge $pq$, we write $C_{pq}=C(D_{pqr},D_{pqs})$.
The following lemma establishes symmetry in
stabbing. 

\begin{lemma}
Let $pqr$ and $pqs$ be two adjacent triangles where $s$ stabs
$pqr$. If $\angle{\nor{pqr},\nor{pqs}} < \frac{\pi}{2}$,
$r$ stabs $pqs$.
\label{stab-both}
\end{lemma}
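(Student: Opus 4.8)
The plan is to work with the bisector plane $C_{pq} = C(D_{pqr}, D_{pqs})$ and the symmetry it induces. The key structural fact is that $C_{pq}$ passes through the common edge $pq$ (since $p$ and $q$ lie on both boundary spheres, they have power distance zero to both $D_{pqr}$ and $D_{pqs}$). Thus $C_{pq}$ is a plane through the line $pq$. The circumcenter of $pqr$ lies on the perpendicular bisector plane of $pq$, as does the circumcenter of $pqs$; intersecting with $C_{pq}$, one sees that $C_{pq}$ separates (or at least, relates) the two circumcenters in a controlled way. I would first verify that $\pow(D_{pqr}, x) - \pow(D_{pqs}, x)$ is an affine function of $x$ vanishing exactly on $C_{pq}$, so that $D_{pqr}$'s interior on the far side of $C_{pq}$ from $D_{pqs}$'s center coincides with the region where $\pow(D_{pqr}, \cdot) < 0 < \pow(D_{pqs}, \cdot)$ fails — more precisely, I would set up that for any point $x$, $\pow(D_{pqr},x) < \pow(D_{pqs},x)$ iff $x$ is strictly on the $pqr$-circumcenter side of $C_{pq}$.

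Next I would use the hypothesis that $s$ stabs $pqr$, i.e. $\pow(D_{pqr}, s) < 0$. Since $s$ lies on the boundary sphere $\partial D_{pqs}$, we have $\pow(D_{pqs}, s) = 0$. Hence $\pow(D_{pqr}, s) < \pow(D_{pqs}, s)$, which places $s$ strictly on the $pqr$-side of $C_{pq}$. By the same token I want to show the symmetric statement: $r$ lies on the boundary of $D_{pqr}$, so $\pow(D_{pqr}, r) = 0$, and I need to conclude $\pow(D_{pqs}, r) < 0$, i.e. $\pow(D_{pqs}, r) < \pow(D_{pqr}, r)$, i.e. $r$ lies strictly on the $pqs$-side of $C_{pq}$. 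So the whole lemma reduces to: \emph{if $s$ is on the $pqr$-side of $C_{pq}$, then $r$ is on the $pqs$-side of $C_{pq}$} — a statement about the reflection geometry of the two triangles across the plane $C_{pq}$ through $pq$.

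To close that geometric reduction, I would introduce the line $pq$ and project everything to the plane $\Pi$ perpendicular to $pq$ through its midpoint $m$. In $\Pi$, the plane $C_{pq}$ becomes a line $\ell$ through $m$; the triangle $pqr$ projects to a ray (or segment) from $m$ in the direction of the foot of $r$, call it the direction $u_r$, and similarly $pqs$ projects to direction $u_s$; the circumcenters project to points on $\ell' = $ the perpendicular bisector line of the projected edge, which is just the line through $m$ perpendicular to... actually since $pq$ projects to the single point $m$, the circumcenters of both triangles project onto $m$'s perpendicular bisector, and one checks the orientation of $\ell$ relative to $u_r$ and $u_s$ is governed precisely by the dihedral angle $\angle{\nor{pqr},\nor{pqs}}$. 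The condition $\angle{\nor{pqr},\nor{pqs}} < \frac{\pi}{2}$ is exactly what guarantees that $u_r$ and $u_s$ lie on \emph{opposite} sides of $\ell$ as one would expect for two genuinely distinct triangle positions — if the dihedral angle exceeded $\frac{\pi}{2}$ the triangles could "fold over" so that $r$ and $s$ end up on the same side, breaking the symmetry. I expect this dihedral-angle case analysis — carefully tracking which side of $\ell$ the directions $u_r, u_s$ fall on as a function of the angle between the triangle normals — to be the main obstacle; the power-distance bookkeeping in the first two paragraphs is routine, but pinning down the sign/orientation argument and seeing precisely where the $\frac{\pi}{2}$ threshold enters requires care. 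Once that is established, combining it with $\pow(D_{pqs}, r) = $ (affine function) $ \le \pow(D_{pqr}, r) = 0$ with strict inequality from $s$ strictly inside $D_{pqr}$ yields $r$ strictly inside $D_{pqs}$, completing the proof.
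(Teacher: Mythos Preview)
Your approach is essentially the same as the paper's: both reduce the lemma to the claim that the bisector $C_{pq}$ separates $r$ and $s$ under the dihedral-angle hypothesis, and then use bisector reasoning (your explicit power-distance inequalities $\pow(D_{pqr},s)<0=\pow(D_{pqs},s)$ etc.\ are equivalent to the paper's ball-containment-on-each-half-space phrasing) to conclude. The separation step you flag as ``the main obstacle'' is precisely what the paper asserts in its opening sentence without further justification, so your projection sketch in fact goes a bit beyond what the paper supplies.
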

\begin{proof}
It can be shown that the bisector $C_{pq}$ separates $r$ and $s$
if the planes of $pqr$ and $pqs$ make an angle larger than $\frac{\pi}{2}$
or equivalently $\angle{\nor{pqr},\nor{pqs}}<\frac{\pi}{2}$.
Let $C_{pq}^+$ be the half-space supported by $C_{pq}$ and containing
$s$. Clearly, $D_{pqs}$ lies inside $D_{pqr}$ in $C_{pq}^+$ as $s$
is on the boundary of $D_{pqs}$. On the other half-space supported
by $C_{pq}$ which does not contain $s$, $D_{pqr}$ lies inside $D_{pqs}$.
But this half-space contains $r$ which is on the boundary of
$D_{pqr}$. This means $r$ is inside $D_{pqs}$.
\end{proof}

If an edge incident to a stabbed triangle is 
flipped in a triangulation with dihedral angles less than
$\frac{\pi}{2}$, 
the circumradius of each new triangle becomes 
smaller than the circumradius of one of the two triangles 
destroyed by the flip. Actually, this is the key to prove that
flip sequence to get rid of all flippable edges terminate.

\begin{lemma}
Let $T$ be a surface triangulation with dihedral angles
smaller than $\frac{\pi}{2}$. Let $pq\in T$ be an edge
incident to a locally stabbed triangle $pqr$ and $pqs$ be the
other triangle incident to $pq$. We have 
$\rho(qrs) \leq \max\{\rho(pqr),\rho(pqs)\}$
and $\rho(prs) \leq \max\{\rho(pqr),\rho(pqs)\}$.
\label{single-flip}
\end{lemma}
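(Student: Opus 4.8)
The plan is to work in the bisector plane $C_{pq}$ and track how circumradii change under the flip. First I would set up the following picture. Since $pqr$ is locally stabbed by $s$, the vertex $s$ lies inside $D_{pqr}$; and since the dihedral angle at $pq$ is smaller than $\frac{\pi}{2}$, Lemma~\ref{stab-both} applies, so $r$ lies inside $D_{pqs}$ as well. Thus each of $r,s$ stabs the diametric ball of the triangle opposite it across $pq$. The key geometric fact I would extract is the one already used in the proof of Lemma~\ref{stab-both}: the bisector $C_{pq}=C(D_{pqr},D_{pqs})$ strictly separates $r$ from $s$, so the segment $rs$ crosses $C_{pq}$. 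Let $m=rs\cap C_{pq}$.

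Next I would analyze the new triangle $prs$ (the argument for $qrs$ is symmetric, with $q$ in place of $p$). The circumscribing sphere of $prs$ — I want to compare $\rho(prs)$ against $\max\{\rho(pqr),\rho(pqs)\}$. The natural route is a power-distance / monotonicity argument. On the side $C_{pq}^+$ of $C_{pq}$ containing $s$, the ball $D_{pqs}$ is contained in $D_{pqr}$ (both pass through $p,q$, and $s\in D_{pqr}$ forces this nesting on the $s$-side, exactly as in the cited proof); symmetrically on the $r$-side, $D_{pqr}\subseteq D_{pqs}$ inside $C_{pq}^-$. Now consider the smallest circumscribing ball $B'=D_{prs}$ of $prs$. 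It passes through $p$, which lies on $C_{pq}$ (indeed $p,q\in C_{pq}$), through $s\in C_{pq}^+$, and through $r\in C_{pq}^-$. The idea is: on the $s$-side, since $s\in\partial D_{prs}$ and $s$ also lies in the nested interior region, one can slide/compare to conclude the portion of $D_{prs}$ on that side is dominated by $D_{pqr}$; on the $r$-side it is dominated by $D_{pqs}$. Made precise, either $D_{prs}$ is entirely contained in $D_{pqr}\cup D_{pqs}$ in a way that bounds its radius, or — more robustly — I would argue pointwise on the power function: for the plane $H$ through $p,r,s$, the point of $H$ with maximal power with respect to... Here I would instead use the cleaner classical lemma that if a sphere $S$ passes through a point $x$ on the bisector $C_{pq}$ and through a point on each side, each side lying in the corresponding nested ball, then the radius of $S$ is at most $\max\{\rho(pqr),\rho(pqs)\}$; this is the $2$-manifold analogue of the standard planar fact that a Delaunay flip does not increase circumradii.

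The main obstacle I anticipate is that $p$, $q$, $r$, $s$ are \emph{not coplanar} — the dense triangulation is genuinely nonplanar across $pq$, with dihedral angle up to $14\eps$ by Corollary~\ref{dihedral} — so the clean planar argument does not transfer verbatim, and the circumcircles of $pqr,pqs,prs,qrs$ live in different planes. The fix I would pursue is to phrase everything in terms of the three-dimensional circumscribing \emph{balls} (not circles) and the bisector \emph{plane} $C_{pq}$, which is well-defined regardless of coplanarity. Concretely: let $B_{rs}$ denote any circumscribing ball of the triangle $prs$. Restricting attention to the half-space $C_{pq}^+$, the ball $B_{rs}$ must contain $p$ (on $C_{pq}$) and $s$ (in the open interior of $D_{pqr}$ on that side). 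One shows the center of $B_{rs}$ can be chosen so that within $C_{pq}^+$, $B_{rs}\cap C_{pq}^+\subseteq D_{pqr}\cap C_{pq}^+$; symmetrically $B_{rs}\cap C_{pq}^-\subseteq D_{pqs}\cap C_{pq}^-$ for an appropriate choice; a continuity/sweeping argument in the pencil of spheres through $p,r,s$ then produces a single circumscribing ball of $prs$ lying inside $D_{pqr}\cup D_{pqs}$ and containing $p$ on its boundary, whence its radius — and hence $\rho(prs)=\radius{D_{prs}}$, which is the minimum over circumscribing balls — is at most $\max\{\rho(pqr),\rho(pqs)\}$. Applying the identical argument with $q$ replacing $p$ gives $\rho(qrs)\leq\max\{\rho(pqr),\rho(pqs)\}$, completing the proof.
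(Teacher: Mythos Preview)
Your approach differs from the paper's in a way that leaves a real gap. You work entirely with the bisector $C_{pq}$ between the two \emph{old} diametric balls $D_{pqr}$ and $D_{pqs}$, and try to manufacture a circumscribing ball of $prs$ that sits inside $D_{pqr}\cup D_{pqs}$. The step ``a continuity/sweeping argument in the pencil of spheres through $p,r,s$ then produces a single circumscribing ball $\ldots$'' is precisely the missing piece: you have two possibly distinct positions in the pencil, one giving $B_{rs}\cap C_{pq}^+\subseteq D_{pqr}$ and another giving $B_{rs}\cap C_{pq}^-\subseteq D_{pqs}$, and you offer no argument that the two conditions are ever satisfied \emph{simultaneously}. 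Nothing in the setup forces the two ``good'' intervals along the pencil to overlap, and without that your conclusion does not follow. (Even granting containment in the union, you would still owe a line explaining why a ball contained in $D_{pqr}\cup D_{pqs}$ has radius at most $\max\{\rho(pqr),\rho(pqs)\}$; this is true, but it is a separate small observation, not an immediate consequence.)

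The paper sidesteps all of this by switching to the bisectors that involve the \emph{new} ball. For $qrs$ it uses $C_{qr}=C(D_{pqr},D_{qrs})$ and $C_{qs}=C(D_{pqs},D_{qrs})$ and simply locates the center of $D_{qrs}$: it lies in $C_{qr}^+\cup C_{qs}^+$. If the center lies in $C_{qr}^+$ (the $s$-side of $C_{qr}$), then because $s\in D_{pqr}$ one has $D_{qrs}\cap C_{qr}^+\subseteq D_{pqr}$, and since the center of $D_{qrs}$ sits on that side this forces $\rho(qrs)\le\rho(pqr)$ directly; the other case gives $\rho(qrs)\le\rho(pqs)$ using $r\in D_{pqs}$ from Lemma~\ref{stab-both}. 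No pencil, no sweeping, no union---just a two-case split on where the center of the new diametric ball lies relative to bisectors that compare it \emph{directly} to each old ball. Recasting your argument with $C_{pr}$ and $C_{ps}$ in place of $C_{pq}$ closes the gap immediately.
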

\begin{proof}
We prove the lemma for $\rho(qrs)$. The case for $\rho(prs)$ can be
proved similarly. 
Consider the bisector $C_{pq}$ of $D_{pqr}$
and $D_{pqs}$, see Figure~\ref{prop1}. 
Let $C_{pq}^+$ be the half-space supported by $C_{pq}$
containing $s$ and $C_{rs}^+$ be the half-space supported 
by $C_{rs}$ containing $p$. 

\begin{figure}
\centerline{\epsfig{file=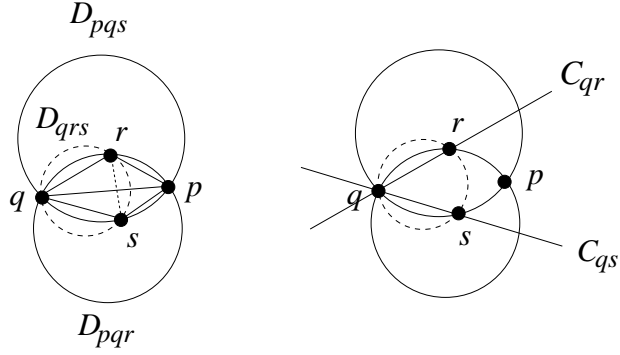,scale=0.55}}
\caption{Lemma~\ref{single-flip}:
triangles after flipping
$pq$ have smaller circumradii.}
\label{prop1}
\end{figure}

By assumption the dihedral angle between $pqr$ and $pqs$ is
at most $\frac{\pi}{2}$. Then, Lemma~\ref{stab-both} applies
to claim that $r$ stabs $pqs$.

Clearly the center of $D_{qrs}$ lies in the union $C_{qr}^+\cup C_{qs}^+$.
First, assume that $C_{qr}^+$ contains the center of
$D_{qrs}$.  Clearly $D_{qrs}\cap C_{qr}^+$ is contained in $D_{pqr}$ as $s$ is
contained in $D_{pqr}$ by the assumption that $s$ stabs $pqr$.  Therefore
$D_{qrs}$ is contained in $D_{pqr}$ in $C_{qr}^+$ which contains the center of
$D_{qrs}$.  This implies that $D_{qrs}$ is smaller than $D_{pqr}$ establishing
the claim. If $C_{qs}^+$ contains the center of $D_{qrs}$ the above
argument can be repeated replacing $D_{pqr}$ with $D_{pqs}$ and $s$ with
$r$.
\end{proof}

Since circumradii of triangles decrease by flipping flippable edges,
triangles still can be oriented consistently with $\Sigma$
and a homeomorphism using closest point map~\cite{ACDL00}
can be established between $\Sigma$ and the new triangulation. 
In sum, the new triangulation satisfies the
conditions for being $\eps$-dense.
\begin{corollary}
If $T\stackrel{e}{\rightarrow} T'$ for a flippable
edge $e$ and $T$ is $\eps$-dense, then
$T'$ is also $\eps$-dense.
\label{density-cor}
\end{corollary}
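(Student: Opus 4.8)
The plan is to verify that $T'$ satisfies the two defining conditions of $\eps$-density — the circumradius bound $\rho(t)\le\eps\gamma$ for every $t\in T'$ and the existence of an orientation of $T'$ consistent with $\Sigma$ — together with the standing requirement that $T'$ be a triangulation of $\Sigma$. Write $e=pq$ and let $pqr$ and $pqs$ be the two triangles of $T$ incident to $pq$, so that $T'$ replaces them by $prs$ and $qrs$; since $e$ is flippable we may take $pqr$ to be the locally stabbed triangle. First I would record that every dihedral angle of $T$ is at most $14\eps<\frac{\pi}{2}$ (Corollary~\ref{dihedral}, using $\eps<0.1$), so Lemma~\ref{single-flip} applies and gives $\rho(qrs),\rho(prs)\le\max\{\rho(pqr),\rho(pqs)\}\le\eps\gamma$; every other triangle of $T'$ already occurs in $T$, so the circumradius bound holds throughout $T'$. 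For the topology, an edge flip is a bistellar move: it exchanges the disk $pqr\cup pqs$ for the disk $prs\cup qrs$, both bounded by the same $4$-cycle $p\,r\,q\,s$, and leaves the rest of the complex untouched, so $|T'|$ is homeomorphic to $|T|\cong\Sigma$; the vertices of $T'$ form a subset of those of $T$ and hence lie on $\Sigma$; and since each triangle of $T'$ has circumradius at most $\eps\gamma<0.1\gamma$ with vertices on $\Sigma$, the closest-point projection to $\Sigma$ realizes this homeomorphism on $|T'|$, as in~\cite{ACDL00}, which also certifies that $T'$ is embedded without self-intersection.

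The substance lies in the consistent orientation, and the point I would stress is that the estimate behind Corollary~\ref{trinorm-cor} is purely local: for any triangle $t$ whose three vertices lie on $\Sigma$ and whose circumradius is at most $\eps\gamma$ (with $\eps<0.1$), the undirected normal line of $t$ makes an angle at most $7\eps$ with $\nor v$ for every vertex $v$ of $t$ — this uses only those hypotheses, not membership of $t$ in an $\eps$-dense triangulation, so it applies to the new triangles $prs$ and $qrs$. Moreover any two vertices of $t$ are at Euclidean distance at most $2\rho(t)\le2\eps\gamma$, so by Lemma~\ref{normlem} their surface normals differ by at most $\frac{2\eps}{1-2\eps}$; together with the triangle inequality for angles this forces exactly one of the two normal directions of $t$ to lie within $7\eps$ of $\nor v$ simultaneously for every vertex $v$. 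Write $\mathbf{m}(t)$ for that direction. For a triangle of $T$ that survives in $T'$, $\mathbf{m}(t)$ is its old normal $\nor t$, since $T$ is $\eps$-dense and Corollary~\ref{trinorm-cor} pins $\nor t$ down to within $7\eps$ of each vertex normal; for $prs$ and $qrs$ we take $\mathbf{m}$ as just defined.

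Finally I would check that these per-triangle normals assemble into a coherent orientation of $|T'|$ that is consistent with $\Sigma$. If $t_1,t_2\in T'$ share an edge with a common vertex $a$, then $\angle(\mathbf{m}(t_1),\mathbf{m}(t_2))\le14\eps<\frac{\pi}{2}$, since each is within $7\eps$ of $\nor a$; hence $\mathbf{m}(t_1)$ and $\mathbf{m}(t_2)$ point to the same side of the nearly coplanar union $t_1\cup t_2$, which a short right-hand-rule computation shows is exactly the condition for the orientations induced on the shared edge by $(t_1,\mathbf{m}(t_1))$ and $(t_2,\mathbf{m}(t_2))$ to be opposite. This holds along every edge of $T'$ — along edges in the untouched part of the complex by the $\eps$-density of $T$, and along the new edges incident to $rs$ by the preceding computation — so $\mathbf{m}$ is a globally coherent orientation of $|T'|$, and its normal at each vertex $v$ lies within $7\eps<\frac{\pi}{2}$ of $\nor v$, i.e., it is consistent with $\Sigma$. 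With the first paragraph this establishes that $T'$ is $\eps$-dense. The delicate steps, where I would take the most care, are (i) legitimising the use of the normal estimate on $prs$ and $qrs$ before $\eps$-density of $T'$ is known, so the argument is not circular, and (ii) promoting the per-triangle choices $\mathbf{m}(t)$ to a single coherent orientation, where the inequality $14\eps<\pi$ is what rules out an orientation-reversing edge.
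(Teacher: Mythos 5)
Your plan is correct and coincides with the paper's own treatment, which consists entirely of the two remark sentences directly before Corollary~\ref{density-cor}: circumradius control follows from Lemma~\ref{single-flip} (legitimised, as you note, by the dihedral-angle bound of Corollary~\ref{dihedral}), and the consistent orientation together with the homeomorphism to $\Sigma$ is argued to persist via the closest-point map of~\cite{ACDL00}. The paper asserts these latter points without further justification; your local-normal analysis --- applying the unoriented version of the normal estimate to the two new triangles and then gluing a coherent choice of $\mathbf{m}$ across every edge of $T'$ --- supplies exactly the detail the paper elides, so this is the same route made explicit rather than a genuinely different one.
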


\begin{lemma}
If $T$ is $\eps$-dense for $\eps<0.1$,
{\sf MeshFlip}$(T)$ terminates.
\label{prop2}
\end{lemma}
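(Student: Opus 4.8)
\textbf{Proof proposal for Lemma~\ref{prop2}.}

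The plan is to argue termination by exhibiting a quantity that strictly decreases with every flip and can only take finitely many values. First I would invoke Corollary~\ref{dihedral}: since $T$ is $\eps$-dense with $\eps<0.1$, every dihedral angle is at most $14\eps<\frac{\pi}{2}$, so the hypotheses of Lemma~\ref{single-flip} are met at the first flip. By Corollary~\ref{density-cor} the property of being $\eps$-dense is preserved by a flip of a flippable edge, hence the dihedral-angle bound and the applicability of Lemma~\ref{single-flip} persist throughout the entire run of {\sf MeshFlip}. This is the inductive scaffolding that lets me treat each flip uniformly.

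Next I would set up the potential function. The natural candidate is the multiset of circumradii $\{\rho(t) : t\in T\}$, compared in sorted (say, non-increasing) order under the lexicographic ordering. When a flippable edge $pq$ incident to a locally stabbed triangle $pqr$ is flipped, the two triangles $pqr,pqs$ are destroyed and $qrs,prs$ are created; all other triangles are untouched. Lemma~\ref{single-flip} gives $\rho(qrs)\le\max\{\rho(pqr),\rho(pqs)\}$ and $\rho(prs)\le\max\{\rho(pqr),\rho(pqs)\}$, so both new values are dominated by the larger of the two removed values. I would then check that this forces a strict lexicographic decrease of the sorted circumradius multiset: the largest removed value $M=\max\{\rho(pqr),\rho(pqs)\}$ is replaced by two values that are each $\le M$, and the second removed value is replaced by nothing larger, so after re-sorting the first position where the old and new sequences differ carries a strictly smaller entry — unless equality holds throughout, which I would rule out by noting that a flip that strictly preserves the multiset would require both $\rho(qrs)=\rho(prs)=M=\rho(pqr)=\rho(pqs)$, and even then the count of triangles with circumradius exactly $M$ drops (two in, at most... ) — here I must be a little careful, so the cleaner route is to use the multiset ordered by $(\text{number of triangles of radius} \ge x)$ type argument, i.e. compare the sorted sequences directly and observe the sum $\rho(pqr)+\rho(pqs)$ is replaced by $\rho(qrs)+\rho(prs)$ with each summand bounded by $M$; combined with the fact that the vertex set is fixed and only finitely many triangles on a fixed finite vertex set exist, the sorted circumradius vector lives in a finite set, so it cannot decrease infinitely often.

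The step I expect to be the main obstacle is making the "strict decrease" airtight in the degenerate case where several circumradii coincide with $M$, since Lemma~\ref{single-flip} only gives a non-strict inequality. The fix I would use: because the vertex set of $T$ is finite and never changes under flips (flips only re-triangulate, they never add or remove vertices), the set of all possible triangles, and hence the set of all possible sorted circumradius vectors, is \emph{finite}; a sequence in a finite partially ordered set that is non-increasing and never stays constant (each flip removes the triangle $pqr$, which is locally stabbed, hence not present afterward — one can show the flip genuinely changes the triangulation) must be finite. So I would replace "strictly decreasing real potential" with "non-increasing potential in a finite poset plus each step changes the triangulation," and close the argument by observing the same triangulation cannot recur because its associated sorted circumradius vector would have to return to an earlier, strictly larger value, contradicting monotonicity. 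That finishes the proof.
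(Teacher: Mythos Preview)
Your overall approach coincides with the paper's: track the non-increasing sorted list of circumradii, keep Lemma~\ref{single-flip} applicable at every step via Corollaries~\ref{density-cor} and~\ref{dihedral}, and finish with finiteness of the set of triangulations on a fixed vertex set. Where the paper simply asserts that the larger of the two destroyed circumradii strictly drops and hence that the radii vector strictly decreases lexicographically, you correctly notice that Lemma~\ref{single-flip} is stated with a non-strict inequality and try to route around this.

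The gap is in your workaround. From ``the potential is non-increasing and takes values in a finite set'' together with ``each flip changes the \emph{triangulation}'' you cannot deduce termination: nothing you have written excludes a cycle $T_0\to T_1\to\cdots\to T_k=T_0$ in which every $T_i$ has the \emph{same} sorted circumradius vector. Your closing sentence, that a recurring triangulation would force the vector back to a strictly larger earlier value, is precisely what fails under a merely non-increasing potential --- the vector at the recurrence is equal to its earlier value, not larger, and monotonicity is not violated. What actually closes the argument (and what the paper uses without spelling it out) is that the bound in Lemma~\ref{single-flip} is strict: ``locally stabbed'' means $s$ lies \emph{strictly} inside $D_{pqr}$, and by Lemma~\ref{stab-both} $r$ then lies strictly inside $D_{pqs}$; rerunning the power/containment argument in that lemma's proof with these strict inclusions yields $\rho(qrs),\rho(prs)<\max\{\rho(pqr),\rho(pqs)\}$. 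With one copy of $M$ removed and two values strictly below $M$ inserted, the lexicographic drop is genuine, no triangulation can recur, and termination follows exactly as the paper concludes.
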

\begin{proof}
Let $R_1,R_2,..,R_n$ be the decreasing sequence of the radii of the diametric
balls of the triangles at any instant of the flip process.  First of all, an
edge flip preserves the number of triangles in the triangulation.  An edge
flip may change the entries in this sequence of radii, but not its length.  We
claim that after a flip the new radii sequence $R_1',R_2',...,R_n'$ decreases
lexicographically, that is, there is a $j$ such that $R_i=R_i'$ for all $1\leq
i\leq j$ and $R_{j+1}>R_{j+1}'$.  Let $j+1$ be the first index where
$R_{j+1}\neq R_{j+1}'$. Since each flip
maintains $\eps$-density (Corollary~\ref{density-cor})
the dihedral angles between adjacent triangles remain at most $14\eps$ by
Corollary~\ref{dihedral}. 
This angle is less than $\frac{\pi}{2}$ for $\eps<0.1$.
One can apply Lemma~\ref{single-flip} to each intermediate
triangulation. By this lemma the maximum of the two
radii before a flip decreases after the flip. This means the triangle
corresponding to the radius $R_{j+1}$ has been flipped and its place has been
taken by a triangle whose circumradius is smaller than $R_{j+1}$. So the new
radii sequence is smaller lexicographically.  It follows that the same
triangulation cannot appear twice during the flip sequence.  As there are
finitely many possible triangulations with a fixed number of vertices, the flip
sequence must terminate.
\end{proof}

\section{Uniform dense triangulation}
We prove Theorems~\ref{thm1} and~\ref{thm2} now.
First, we need some 
technical results (Lemmas~\ref{twoside},
and~\ref{stab-free}).
We want to prove that if a vertex stabs the diametric
ball of a triangle, it does not
project orthogonally to a point inside that triangle.
Next lemma is used to prove this fact.

\begin{lemma}
Assume that a vertex $v$ stabs a triangle $pqr$ in a $\eps$-dense 
triangulation of a surface where $\eps<0.1$.
Let $\bar{v}$ be the point in $pqr$ closest to
$v$.  The angle between the segment $v\bar{v}$ and the line of $\nor{pqr}$ is
at least $\frac{\pi}{2} - 26\eps$. 
\label{tech}
\end{lemma}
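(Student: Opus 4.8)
The plan is to argue by contradiction: suppose the angle between $v\bar v$ and the line of $\nor{pqr}$ is less than $\frac{\pi}{2}-26\eps$, i.e. $v\bar v$ is ``nearly parallel'' to the triangle normal. The first step is to relate the direction $v-\bar v$ to the surface normals at the relevant points. Since $\bar v$ is the closest point of $pqr$ to $v$, the segment $v\bar v$ is orthogonal to $pqr$ if $\bar v$ lies in the relative interior; if $\bar v$ lies on the boundary the angle is only larger, so the worst case is $\bar v$ in the interior, where $v\bar v$ is exactly along $\pm\nor{pqr}$. I would first reduce to this interior case (or treat the two cases uniformly by noting that the angle hypothesis forces $\bar v$ into the interior anyway), so that we may assume $v-\bar v = \pm\rho'\nor{pqr}$ for some $\rho'>0$ with $\rho'=\|v-\bar v\|\le 2\rho(pqr)\le 2\eps\gamma$ — the last inequality because $v$ stabs $D_{pqr}$ so $\|v-c\|<\rho(pqr)$ and $\bar v$ lies in $pqr\subset D_{pqr}$.

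The heart of the argument is then a normal-comparison computation. Let $u\in\Sigma$ be the point of $\Sigma$ closest to $\bar v$ (or simply take a vertex of $pqr$, say $q$, which is already on $\Sigma$); the two are within $O(\eps\gamma)$ of each other and of $v$. By Corollary~\ref{trinorm-cor}, $\angle{\nor{pqr},\nor q}\le 7\eps$, and by Lemma~\ref{normlem} applied to the pair $v,q$ (whose distance is $\le 2\eps\gamma+\eps\gamma<\frac13\gamma$ for $\eps<0.1$), $\nor v$ and $\nor q$ differ by at most $\frac{\eps}{1-\eps}$; hence $\nor v$ and $\nor{pqr}$ are within roughly $8\eps$ (I would be slightly careful with the constant, but it is comfortably below $26\eps$). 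Again by Lemma~\ref{normlem}, the chord direction $v-q$ makes angle at least $\arccos(\frac{3\eps}{2})$, i.e. at least $\frac{\pi}{2}-O(\eps)$, with $\nor v$, hence with $\nor{pqr}$. On the other hand, $\|v-\bar v\|$ is comparable to or smaller than $\|v-q\|$ only up to a bounded factor, and the angle assumption says $v-\bar v$ makes angle $<\frac{\pi}{2}-26\eps$ with $\nor{pqr}$, i.e. $v-\bar v$ has a component of size $\ge \sin(26\eps)\,\rho' $ along $\nor{pqr}$. I would then write $v - q = (v-\bar v) + (\bar v - q)$, note $\bar v - q$ lies in the plane of $pqr$ (hence is orthogonal to $\nor{pqr}$), and conclude that the $\nor{pqr}$-component of $v-q$ equals the $\nor{pqr}$-component of $v-\bar v$, which by the angle assumption is too large — it forces $\angle{\nor{pqr},(v-q)} < \frac{\pi}{2}-26\eps + O(\eps)$, contradicting the lower bound $\frac{\pi}{2}-O(\eps)$ from Lemma~\ref{normlem} once the constants are tallied (this is where the bound $\eps<0.1$ and the specific constant $26$ get used).

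The main obstacle I anticipate is bookkeeping the constants so that the slack between the ``near-$\frac{\pi}{2}$'' lower bound on $\angle{\nor{pqr},(v-q)}$ and the assumed ``$<\frac{\pi}{2}-26\eps$'' upper bound is genuinely positive for all $\eps<0.1$; this requires tracking the $7\eps$ from Corollary~\ref{trinorm-cor}, the $\frac{\eps}{1-\eps}$ from Lemma~\ref{normlem}, and the $\arccos(\frac{3\eps}{2})$ term, and verifying the ratio $\|v-\bar v\|/\|v-q\|$ does not blow up — the latter because both are $\Theta(\eps\gamma)$ when $v$ stabs $D_{pqr}$, using that $q$ lies on $\partial D_{pqr}$ and $v$ inside. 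A secondary, more geometric subtlety is handling the boundary case $\bar v\in\partial(pqr)$ cleanly; I would dispatch it by observing that projecting onto an edge rather than the plane can only increase the angle between $v\bar v$ and $\nor{pqr}$, so the interior case is indeed extremal.
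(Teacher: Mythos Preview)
Your argument has a real gap, and it is exactly at the point you flagged as ``the main obstacle'': controlling the ratio $\|v-q\|/\|v-\bar v\|$. You correctly observe that the $\nor{pqr}$-component of $v-q$ equals that of $v-\bar v$, so writing $h=|\langle v-q,\nor{pqr}\rangle|$ the chord bound from Lemma~\ref{normlem} together with Corollary~\ref{trinorm-cor} gives $h\le \|v-q\|\sin(O(\eps))$. The contradiction hypothesis gives $h>\|v-\bar v\|\sin(26\eps)$. But these two are compatible unless $\|v-\bar v\|\gtrsim\|v-q\|$, and that inequality simply fails: $\|v-\bar v\|$ can be as small as the height $h$ itself (which is $O(\eps^2\gamma)$) while $\|v-q\|$ is $\Theta(\eps\gamma)$. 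Your claimed fix, that ``both are $\Theta(\eps\gamma)$ when $v$ stabs $D_{pqr}$'', is not true; nothing prevents $v$ from sitting at height $O(\eps^2\gamma)$ directly over an interior point of $pqr$, in which case $\bar v$ is that interior point, $\|v-\bar v\|=h$, and the angle in question is exactly $0$.

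More conceptually, your argument uses only that $v\in\Sigma$, never that $v$ is a vertex of the triangulation $T$. But the statement is \emph{false} for arbitrary points of $\Sigma$: on a round sphere, the point of $\Sigma$ nearest the circumcenter of $pqr$ lies inside $D_{pqr}$, projects orthogonally into the interior of $pqr$, and gives angle $0$. So no amount of constant-tracking will close your argument. The paper's proof supplies the missing ingredient: it walks from $v$ toward $\bar v$ until the segment first meets $T$ at some triangle $abc$, and then exploits the consistent orientation of $T$. If the angle between $v\bar v$ and $\nor{pqr}$ were small, the line would make an acute angle with $\nor v$ (exiting the bounded region at $v$) and would have to make an obtuse angle with $\nor{abc}$ (entering at $y\in abc$); but $\nor v$ and $\nor{abc}$ are within $15\eps$ of each other, and this forces the contradiction. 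The crossing-plus-orientation argument is precisely what encodes ``$v$ is a vertex of the embedded oriented manifold $T$'' rather than merely ``$v\in\Sigma$''.
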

\begin{proof}
Let $T$ be a $\eps$-dense triangulation of surface $\Sigma$ with reach $\gamma$.
Since $v$ stabs $D_{pqr}$, we have $\norm{p-v} \leq 2\eps \gamma$ which implies
that $\norm{v-\bar{v}} \leq 2\eps\gamma$.  Walk from $v$ towards $\bar{v}$
and let $abc$ be the first triangle in $T$ that we hit.  Let $y$ be the point
in $abc$ that we hit.  (The triangle $abc$ could possibly be $pqr$.) We have
$\norm{v-y} \leq \norm{v-\bar{v}} \leq 2\eps\gamma$.  By $\eps$-density assumption, 
we have $\norm{a-y} \leq 2\eps\gamma$.  It follows that
$\norm{a-v} \leq \norm{a-y} + \norm{v-y} \leq 4\eps\gamma$.  Then,
$\angle{\nor{v},\nor{a}} \leq 8\eps$ by
Lemma~\ref{normlem}, and $\angle{\nor{abc},\nor{a}} \leq 7\eps$ by
Corollary~\ref{trinorm-cor}.  Therefore, $\angle{\nor{v},\nor{abc}} \leq
8\eps + 7\eps\leq 15\eps$.

Let $\ell$ be an oriented line through $v$ and $\bar{v}$ such that $\ell$
enters the polyhedron bounded by $T$ at $y \in abc$ and then exits at $v$.
Assume to the contrary that $\ell$ makes an angle less than $\frac{\pi}{2} -
26\eps$ with $\nor{pqr}$.  Since $\norm{p-v} \leq 2\eps\gamma$,
Lemma~\ref{normlem} and Corollary~\ref{trinorm-cor} imply that $\angle
\nor{v},\nor{pqr} \leq 4\eps + 7\eps\leq 11\eps$.
Thus, $\ell$ makes an angle less than
$\frac{\pi}{2} - 15\eps$ with $\nor{v}$.  Since $\angle \nor{v},\nor{abc} <
15\eps$, $\ell$ must make an angle less than $\frac{\pi}{2}$ with
$\nor{abc}$.  Because $\ell$ enters at $y$ and then exits at $v$, $\angle
\nor{v},\nor{abc}$ is greater than $\pi - (\frac{\pi}{2} - 15\eps) -
\frac{\pi}{2} = 15\eps$, contradicting the previous deduction that $\angle
\nor{v},\nor{abc} < 15\eps$.
\end{proof}
\begin{lemma}

Assume that a vertex $v$ stabs $D_{pqr}$ of a triangle $pqr$ in
a $\eps$-dense triangulation
where $\eps<0.1$.  There exists an edge, say $pq$, such that $r$ and
$v$ are separated by the plane $H_{pq}$ that contains $pq$ and is perpendicular
to $pqr$.
\label{twoside}
\end{lemma}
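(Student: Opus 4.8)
The plan is to argue by contradiction: suppose that for every edge of $pqr$, the vertex $r$ (and symmetrically $p$ or $q$) lies on the same side of the perpendicular plane $H_{pq}$ as $v$. The three planes $H_{pq}$, $H_{qr}$, $H_{rp}$ — each containing an edge of $pqr$ and perpendicular to the plane of $pqr$ — partition the prism over $pqr$ into regions, and the orthogonal projection $\bar v$ of $v$ onto the plane of $pqr$ determines which region $v$ sits above. If $r$ and $v$ are \emph{not} separated by $H_{pq}$ for \emph{any} choice of the ``opposite vertex/edge'' pairing, then $\bar v$ must lie in the interior of $pqr$ (this is exactly the statement that $\bar v$ is on the same side of each edge-perpendicular as the opposite vertex). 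So the negation of the lemma's conclusion forces $\bar v$ to project orthogonally into the triangle $pqr$.

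The second step is to invoke Lemma~\ref{tech}, which says precisely that when $v$ stabs $D_{pqr}$, the segment $v\bar v$ makes an angle of at least $\frac{\pi}{2}-26\eps$ with the line of $\nor{pqr}$; equivalently, $v\bar v$ is nearly parallel to the plane of $pqr$ — it makes an angle at most $26\eps$ with that plane. But if $\bar v$ lies in the interior of $pqr$ and $v$ stabs the diametric ball $D_{pqr}$, then $v$ is strictly inside the smallest enclosing ball while projecting into the triangle; I would show this is geometrically impossible for $\eps < 0.1$. Concretely, $v\bar v$ being almost horizontal together with $\bar v$ interior means $v$ is displaced from the plane of $pqr$ by at most a $\sin(26\eps)$ fraction of $\|v - \bar v\|$, so $v$ is essentially coplanar with $pqr$ and lies (almost) over the interior — hence $v$ is at distance roughly $\|v-\bar v\|$ from all three of $p,q,r$ from ``inside'', which contradicts $v$ lying inside the diametric ball (the diametric ball's intersection with a slab close to the plane of $pqr$ is close to the disk $D_t$, and a point projecting into the interior of $t$ that is inside $D_t$ would have to be a vertex of $T$ at distance $\le 2\eps\gamma$ from $pqr$'s vertices but separated from them by no edge-perpendicular — this is where I'd derive the contradiction with the manifold/$\eps$-density structure, or more simply with consistent orientation and Corollary~\ref{trinorm-cor}). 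The cleanest route: the condition ``$\bar v$ interior to $pqr$'' plus ``$v\bar v$ nearly parallel to plane of $pqr$'' is self-contradictory because $v\bar v \perp$ (plane of $pqr$) by definition of closest point $\bar v$ when $\bar v$ is an \emph{interior} point — so the angle between $v\bar v$ and $\nor{pqr}$ would be $0$, not $\ge \frac{\pi}{2}-26\eps$, an outright contradiction since $26\eps < \frac{\pi}{2}$ for $\eps < 0.1$.

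So the real content is: (i) the combinatorial observation that failure of the separation conclusion for all three perpendicular planes is equivalent to $\bar v \in \mathrm{int}(pqr)$, and (ii) the fact that $\bar v$ interior forces $v\bar v \parallel \nor{pqr}$, directly contradicting Lemma~\ref{tech}. The main obstacle I expect is handling the boundary cases carefully — when $\bar v$ lies on an edge or at a vertex of $pqr$ rather than strictly interior — and being precise about which of $p$, $q$, $r$ plays the role of the separated ``opposite'' vertex for each edge; one must check that \emph{some} edge works, i.e. that $\bar v$ outside $pqr$ really does get separated from the opposite vertex by the corresponding perpendicular plane. This requires a short case analysis on which edge's perpendicular strip $\bar v$ falls outside of, but it is elementary planar geometry and I would dispatch it by noting that if $\bar v \notin pqr$ then there is an edge, say $pq$, whose line (or rather the perpendicular $H_{pq}$, together with the two vertex-perpendiculars at $p$ and $q$) has $\bar v$ on the opposite side from $r$, and hence $v$ on the opposite side from $r$ since $v$ and $\bar v$ lie on the same vertical line. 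With (i) and (ii) in hand, the lemma follows immediately.
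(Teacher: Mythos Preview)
Your proposal is essentially the paper's own argument: invoke Lemma~\ref{tech} to rule out $\bar v$ lying in the interior of $pqr$ (since then $v\bar v \parallel \nor{pqr}$, giving angle $0$ rather than $\ge \tfrac{\pi}{2}-26\eps$), and then observe that if the orthogonal projection of $v$ falls outside $pqr$, one of the three perpendicular edge-planes $H_{pq},H_{qr},H_{rp}$ must separate $v$ from the opposite vertex. The paper does this in three lines; your ``cleanest route'' paragraph \emph{is} that proof, and the earlier digression about slabs, circumradii, and manifold structure can simply be dropped.

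One arithmetic slip to fix: you assert $26\eps < \tfrac{\pi}{2}$ for $\eps < 0.1$, but $26\cdot 0.1 = 2.6 > \tfrac{\pi}{2}\approx 1.57$; the contradiction from Lemma~\ref{tech} literally requires $\eps < \pi/52 \approx 0.06$. The paper glosses over the same point (it just writes ``makes a positive angle'' under the hypothesis $\eps<0.1$), so your argument is as tight as the original, but be aware the constant is loose.
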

\begin{proof}
By Lemma~\ref{tech}, $v\bar{v}$ makes a positive angle with the line of
$\nor{pqr}$.  It follows that $v$ does not project orthogonally onto a point
inside $pqr$.  Hence, there exists an edge $pq$ such that $H_{pq}$ separates
$r$ and $v$.
\end{proof}

Next lemma leads to Theorem~\ref{thm1}.
This is where we require bounded aspect ratios of triangles
which ultimately lead to the uniformity condition.
The aspect ratio of a triangle $t$ is the ratio of $\rho(t)$
to its smallest edge length. 

\begin{lemma}
Assume that a vertex $v$ stabs a triangle $pqr$ in a $\eps$-dense
triangulation $T$ where each triangle
has aspect ratio $a<\frac{1}{2\sin 24\eps}$. If $\eps<\frac{\pi}{72}$,
either $pqr$ is locally stabbed or $v$
stabs a triangle $t$ such that $\pow(v,D_{t}) < \pow(v,D_{pqr})$.
\label{stab-free}
\end{lemma}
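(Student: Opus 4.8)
The plan is to show that whenever $v$ stabs $D_{pqr}$ but $v$ is not a neighbor vertex of $pqr$, we can ``walk'' toward $v$ through the triangulation and find a triangle $t$ whose diametric ball is pierced more deeply by $v$, i.e. $\pow(v,D_t)<\pow(v,D_{pqr})$. By Lemma~\ref{twoside}, there is an edge of $pqr$, say $pq$, such that the plane $H_{pq}$ (through $pq$, perpendicular to $pqr$) separates $r$ from $v$. Let $pqs$ be the other triangle of $T$ incident to $pq$. If $s=v$, then $pqr$ is locally stabbed and we are done. Otherwise, I would first argue that $v$ also stabs $D_{pqs}$: since the dihedral angle at $pq$ is at most $14\eps<\frac{\pi}{2}$ (Corollary~\ref{dihedral}), Lemma~\ref{stab-both} gives that $s$ stabs $D_{pqr}$ and, more to the point, the bisector $C_{pq}$ separates $r$ from $s$; because $v$ lies on $r$'s side of $H_{pq}$ and $H_{pq}$ is ``close'' to $C_{pq}$ when the two triangles are nearly coplanar, $v$ lies in the half-space $C_{pq}^-$ not containing $s$, where $D_{pqr}\subseteq D_{pqs}$, so $v\in D_{pqr}\Rightarrow v\in D_{pqs}$; moreover the power-distance inequality $\pow(v,D_{pqs})\le\pow(v,D_{pqr})$ holds in that half-space.

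The remaining — and harder — step is to upgrade the non-strict inequality to a strict one, or else to iterate. The natural move is: if $v$ does not project orthogonally into $pqs$ either, apply Lemma~\ref{twoside} again to $pqs$ to get a new separating edge and a new neighbor triangle, and continue. To make this terminate with a \emph{strict} decrease, I would track the power distance: at each step $\pow(v,\cdot)$ is non-increasing, and it strictly decreases unless $v$ lies exactly on the bisector. I expect the aspect-ratio hypothesis $a<\frac{1}{2\sin 24\eps}$ to enter precisely here: it forces the triangles crossed during the walk toward $v$ to be ``fat,'' so that the walk makes genuine geometric progress — each triangle we enter has all its vertices within a controlled distance of $v$, the edge lengths are at least $\rho/a>2\rho\sin 24\eps$, and a degenerate (zero-decrease) configuration would force some angle to exceed what the $\eps$-density and $24\eps$-type bounds allow, giving a contradiction. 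So the argument is: either at some step the separating edge's opposite triangle across $pq$ has $v$ as its apex (local stabbing), or the power distances form a strictly decreasing finite sequence terminating at a triangle $t$ with $\pow(v,D_t)<\pow(v,D_{pqr})$.

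Concretely, the key steps in order are: (i) invoke Lemma~\ref{twoside} to get the separating plane $H_{pq}$; (ii) compare $H_{pq}$ with the bisector $C_{pq}$ and use the small dihedral angle (Corollary~\ref{dihedral}) to place $v$ in the half-space where $D_{pqr}\subseteq D_{pqs}$, obtaining $v\in D_{pqs}$ and $\pow(v,D_{pqs})\le\pow(v,D_{pqr})$; (iii) case-split: if $s=v$ we have local stabbing; (iv) otherwise, if $v$ projects orthogonally into $pqs$ set $t=pqs$ and show the inequality is strict using the aspect-ratio bound to exclude the degenerate coplanar-and-cocircular case; (v) if $v$ does not project into $pqs$, recurse on $pqs$ and argue via the strictly decreasing power distance (and fatness of the triangles traversed) that the process halts at the desired $t$ or at a locally stabbed triangle.

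The main obstacle I anticipate is step (v): controlling the walk and proving the power distance drops strictly at each stage. The subtlety is that $H_{pq}$ and $C_{pq}$ are not the same plane — they agree only up to an $O(\eps)$ angle — so one must be careful that $v$, which is merely separated from $r$ by $H_{pq}$, really lands on the correct side of $C_{pq}$; this is where the quantitative bounds ($24\eps$, $\eps<\frac{\pi}{72}$) and the aspect-ratio constraint must be combined to show the two planes are close enough relative to how far $v$ sits from them. I would handle this by bounding the angle between $H_{pq}$ and $C_{pq}$ in terms of the dihedral angle and the aspect ratio, then bounding $v$'s distance from $pq$ from below (using that $v$ is \emph{not} a neighbor vertex, so it is ``far'' from the edge in the relevant sense) to conclude $v$ is strictly on the $C_{pq}^-$ side.
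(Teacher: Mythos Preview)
Your proposal has the right skeleton (use Lemma~\ref{twoside} to find the separating edge $pq$, pass to the neighbour $pqs$, compare $H_{pq}$ with the bisector $C_{pq}$), but the geometry is inverted at the crucial step and the case split is off.

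First, the case split. ``Locally stabbed'' does not mean $s=v$; it means \emph{some} neighbour vertex of $pqr$ lies in $D_{pqr}$. So the correct dichotomy is: either $s\in D_{pqr}$ (and $pqr$ is locally stabbed, done), or $s\notin D_{pqr}$. In the second case Lemma~\ref{stab-both} is irrelevant---its hypothesis is that $s$ stabs $pqr$, which you have just assumed fails; you cannot invoke it to \emph{conclude} that $s$ stabs $D_{pqr}$.

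Second, the sides. Lemma~\ref{twoside} says $H_{pq}$ separates $r$ from $v$, so $v$ is on the side of $H_{pq}$ \emph{opposite} $r$, which is the side containing $s$ (since the dihedral angle is below $\pi/2$). What you need is that $v$ lies on the \emph{same} side of $C_{pq}$ as $s$: on that side, because $s\notin D_{pqr}$, one has $D_{pqr}\subset D_{pqs}$, hence $v\in D_{pqs}$ and $\pow(v,D_{pqs})<\pow(v,D_{pqr})$. Your write-up places $v$ on $r$'s side and claims the containment on the side not containing $s$, both of which are backwards.

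Third, where the aspect ratio really enters. The only obstacle is the thin wedge between $H_{pq}$ and $C_{pq}$: $v$ is known to be on $s$'s side of $H_{pq}$, but $C_{pq}$ is tilted from $H_{pq}$ by at most the dihedral angle $\le 14\eps$, and $v$ could a priori fall into that wedge. The paper rules this out by bounding the angle that the segment $v\bar v$ (with $\bar v$ the foot on line $pq$) makes with $\nor{pqr}$: if $\bar v$ lands on the segment $pq$, Lemma~\ref{tech} already gives $\ge \frac{\pi}{2}-26\eps>14\eps$; if $\bar v$ falls beyond an endpoint, say $p$, then $\angle vpq$ is obtuse but at most $\frac{\pi}{2}+\arccos\frac{1}{2a}$ (from the tangent plane to $D_{pqr}$ at $p$), and combining with $\angle(\nor{pqr},vp)\ge\arccos\eps-7\eps$ yields $\angle(\nor{pqr},v\bar v)\ge \frac{\pi}{2}-10\eps-\arccos\frac{1}{2a}$. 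Requiring this to exceed $14\eps$ is exactly $a<\frac{1}{2\sin 24\eps}$. That is the whole role of the aspect-ratio hypothesis; it is not about excluding a ``coplanar-and-cocircular'' degeneracy or about making a walk terminate.

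Finally, no recursion is needed here: the lemma asserts only a single step ($t=pqs$ works), and the iteration you sketch in (v) is what happens later in the proof of Theorem~\ref{thm1}.
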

\begin{proof}
By Lemma~\ref{twoside}, there is a plane $H_{pq}$ through the edge $pq$ and
perpendicular to $pqr$ such that $H_{pq}$ separates $r$ and $v$.  Let $pqs$ be
the other triangle incident to $pq$. If $s$ lies inside $D_{pqr}$, $pqr$ is
locally stabbed and we are done.  So assume that $s$ does not lie inside $D_{pqr}$.
By Corollary~\ref{dihedral} $\angle{\nor{pqr},\nor{pqs}} \leq 14\eps$,
which is less than $\frac{\pi}{2}$ for $\eps<\frac{\pi}{72}$. 
Therefore, $H_{pq}$
separates $r$ and $s$ too.  It means that $v$ and $s$ lie on the same side of
$H_{pq}$;
see Figure~\ref{flipterm}.

Let $C_{pq}$ denote the bisector of $D_{pqr}$ and $D_{pqs}$. Suppose that
$C_{pq}$ contains $v$ and $s$ on the same side.  
It follows that $D_{pqs}$ contains
$D_{pqr}$ inside on this side as $s$ lies outside $D_{pqr}$. Also $v$ lies
inside $D_{pqs}$ since $v$ lies inside $D_{pqr}$. It immediately implies that
$v$ stabs $pqs$ and $\pow(v,D_{pqs})<\pow(v,D_{pqr})$.
Therefore, we can establish the lemma if we prove that $C_{pq}$
contains $v$ and $s$ on the same side. This is exactly where we need
bounded aspect ratios for triangles.

Let $\bar{s}$ and $\bar{v}$ be the orthogonal projections of $s$ and $v$
respectively onto the line of $pq$. 
Consider the following facts.
\begin{enumerate}
\item[(i)] The acute angle between $s\bar{s}$ and $\nor{pqr}$ is 
equal to $\frac{\pi}{2} -
\angle{\nor{pqr},\nor{pqs}}$, which is at least $\frac{\pi}{2} - 14\eps$ by
Corollary~\ref{dihedral}.  
\item[(ii)] The angle between $H_{pq}$ and $C_{pq}$ 
cannot be larger than
$\angle{\nor{pqr},\nor{pqs}}$ which is at most $14\eps$. 
\item[(iii)] We prove that $\angle{\nor{pqr},v\bar{v}}>
\angle{\nor{pqr},\nor{pqs}}=\angle{H_{pq},C_{pq}}$.

\end{enumerate}
The above three facts together imply that $C_{pq}$ contains $v$ and
$s$ on the same side as $H_{pq}$. 
Therefore,
only thing remains to prove is fact (iii).

First, observe that if $\bar{v}$ is the closest point of $v$ in $pq$,
we have by Lemma~\ref{tech}
$$\angle{\nor{pqr},v\bar{v}}\geq \frac{\pi}{2}-26\eps\geq 14\eps\geq 
\angle{\nor{pqr},\nor{pqs}}.$$ 
So, assume the contrary. In that case, the
closest point of $v$ in $pq$ is one of $p$ or $q$. Assume it to be
$p$. Since $\bar{v}$ lies outside
$pq$, the angle $\angle{vpq}$ is obtuse. We claim that
this angle cannot be arbitrarily close to $\pi$. 
In fact, this angle cannot be more than the maximum obtuse angle $pq$ makes
with the tangent plane of $D_{pqr}$ at $p$. Simple calculation 
(Figure~\ref{flipterm}(middle)) shows that
this angle is $\frac{\pi}{2}+\arccos{\|p-q\|/2\rho(pqr)}$ giving
$$
\angle{vpq}\leq \frac{\pi}{2}+\arccos{\frac{1}{2a}}
$$ where $a$ is the aspect ratio of $pqr$. 
Since $D_{pqr}$ contains $v$ inside, $\|v-p\|\leq 2\eps\gamma$.
By Lemma~\ref{normlem} $\angle{\nor{p},vp}\geq\arccos\eps$.
Applying Corollary~\ref{trinorm-cor} we get 
\begin{eqnarray*}
\angle{\nor{pqr},vp}\geq \angle{\nor{p},vp}-\angle{\nor{pqr},\nor{p}}
\geq \arccos{\eps}-7\eps.
\end{eqnarray*}
Let $zp||v\bar{v}$ (Figure~\ref{flipterm}(right)). Then, 
$\angle{v\bar{v},vp}=\angle{vpz}=\angle{vpq}-\frac{\pi}{2}\leq\arccos{\frac{1}{2a}}$.
One has
\begin{eqnarray*}
\angle{\nor{pqr},v\bar{v}}\geq\angle{\nor{pqr},vp}-\angle{v\bar{v},vp}=
 \angle{\nor{pqr},vp}-\angle{vpz}&\geq& 
\arccos{\eps}-7\eps-\arccos{\frac{1}{2a}}\\
&\geq& \frac{\pi}{2}-10\eps-\arccos{\frac{1}{2a}}
\mbox{ for $\eps<\frac{\pi}{72}$}.
\end{eqnarray*}
\begin{eqnarray*}
\mbox{We are now left to show that }
\frac{\pi}{2}-10\eps-\arccos{\frac{1}{2a}}
 & > & \angle{\nor{pqr},\nor{pqs}},\\
\mbox{requiring, }\frac{\pi}{2}-24\eps&>& \arccos{\frac{1}{2a}}\\
or,~a &<& \frac{1}{2\sin 24\eps}.
\end{eqnarray*}
This is precisely the condition required by the lemma
which can be achieved for $\eps<\frac{\pi}{72}$.

\begin{figure}
\centerline{\epsfig{file=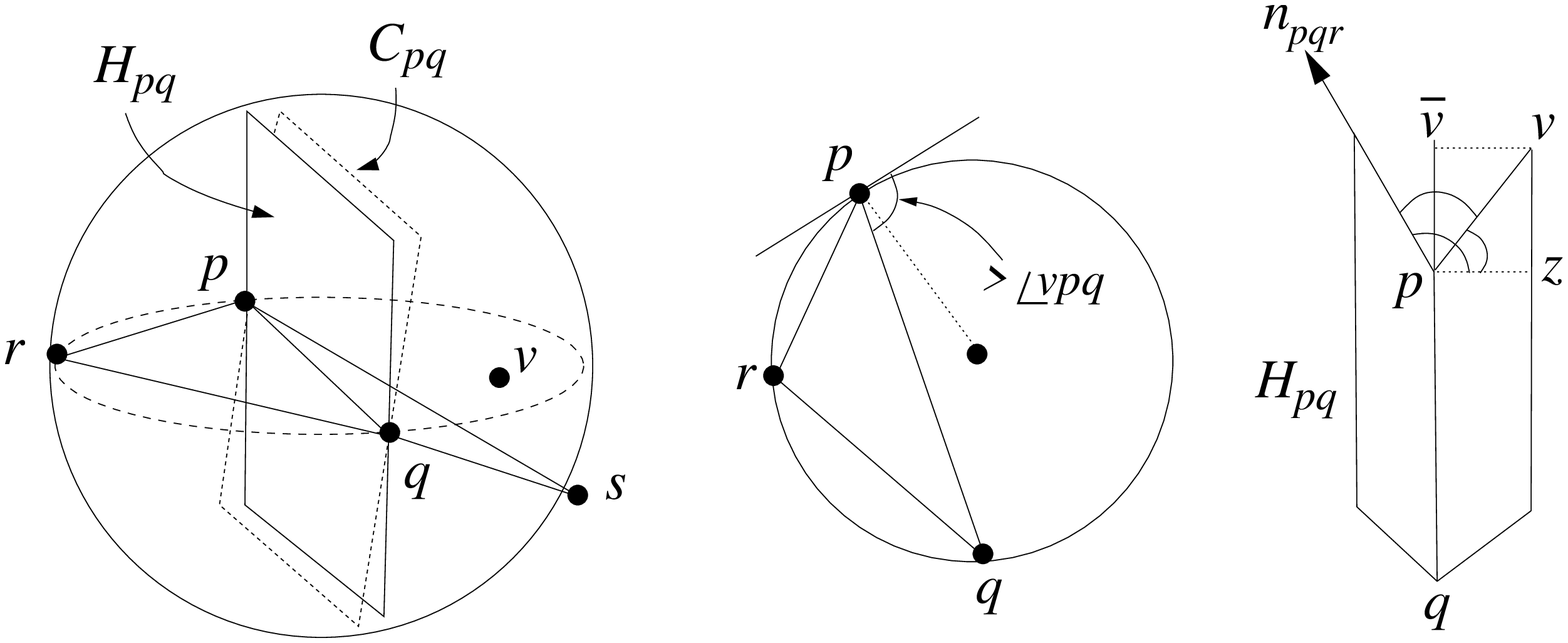,scale=.45}}
\caption{(left) : triangle $pqr$ is stabbed by $v$. Both $v$ and
$s$ lie on the same side of $H_{pq}$ and $C_{pq}$. The case of $v$
being in the thin wedge between $H_{pq}$ and $C_{pq}$ is eliminated
if $pqr$ has bounded aspect ratio. (middle) : the worst case for
angle $\angle{vpq}$. (right): the planes of $H_{pq}$ and $vpq$ make
large angle ensuring $v$ and $s$ are on the same side of $C_{pq}$.}
\label{flipterm}
\end{figure}
\end{proof}

\begin{proof}[Proof of Theorems~\ref{thm1} and ~\ref{thm2}.]
The `if' part of Theorem~\ref{thm1} is obvious. 
To prove the `only if' part,
let $pqr$ be stabbed by $v$. With $\delta=2\sin 24\eps$ aspect ratios
are at most $1/(2\sin 24\eps)$. So, by
Lemma~\ref{stab-free}, either $pqr$ is locally stabbed or $v$ stabs a triangle $t$
where $\pow(v,D_{t})< \pow(v,D_{pqr})$.  In the latter case 
repeat the argument with $t$.  We must reach a locally stabbed triangle
since the power distance of $v$ from the diametric balls 
cannot decrease indefinitely.
For Theorem~\ref{thm2} observe that
maximum circumradius decreases after each flip
and nearest neighbor distance cannot be decreased by flips. So, 
{\sf MeshFlip} maintains $(\eps,\delta)$-density after each flip which is
the only thing remained to be proved. 
\end{proof}

\section{Dense triangulations}
We establish Theorems~\ref{thm3} and~\ref{thm4} in this section. We drop
the uniformity condition, i.e., we assume $T$ is only
$\eps$-dense for some $\eps>0$. We will use the notation
$\pi_t^{\beta}$ to denote the plane parallel to $t$ and
passing through the point $c+\beta\nor{t}$ where $c$ is the
circumcenter of $t$. In other words, $\pi_t^{\beta}$ is the 
diametric plane parallel to $t$ in the $\beta$-ball of $t$.

\begin{lemma}
For $\eps<0.1$, let $T$ be a $\eps$-dense triangulation
of a surface with reach $\gamma$.
A vertex stabs a $\beta$-ball of a triangle $t\in T$ only if
there is a triangle $t'\in T$ with $v$ as a neighbor vertex
and $v$ stabs the circumscribing ball of $t'$
that has center in the plane of $\pi^{\beta}_t$.
\label{alpha-stab}
\end{lemma}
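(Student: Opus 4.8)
The plan is to mimic the structure of the proof of Lemma~\ref{stab-free}, but working in the parallel plane $\pi_t^\beta$ rather than in the plane of $t$ itself. Suppose a vertex $v$ stabs a $\beta$-ball $B$ of $t=pqr$. The first step is to reduce to a local statement about the plane $\pi_t^\beta$: I would argue that since $v$ stabs $B$ and (by $\eps$-density plus Lemma~\ref{tech} applied to the diametric ball, together with Corollary~\ref{trinorm-cor} bounding $\angle{\nor{t},\nor{v}}$) the segment from $v$ to the plane of $t$ is nearly parallel to $\nor{t}$, the orthogonal projection of $v$ into the plane of $t$ — equivalently into the parallel plane $\pi_t^\beta$ — falls outside the triangle $t$. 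Hence there is an edge, say $pq$, whose perpendicular plane $H_{pq}$ (perpendicular to $t$, through $pq$) separates $v$ from $r$. This is the analogue of Lemma~\ref{twoside}, and the only new point is that the relevant ``diametric plane'' is now $\pi_t^\beta$; but $\pi_t^\beta$ is parallel to the plane of $t$, so the perpendicular separating plane $H_{pq}$ is literally the same plane, and the projection argument is unchanged.

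Next, let $pqs=t'$ be the other triangle sharing $pq$. If $s$ stabs the circumscribing ball of $t'$ whose center lies in $\pi_t^\beta$ — call it $B'$ — then since $v$ is the desired neighbor vertex of... wait, the roles are: we want a triangle $t'$ with $v$ as a neighbor vertex. So actually I want to show $v$ stabs the circumscribing ball of $pqs$ centered in $\pi_t^\beta$, using $v$'s neighbor-status relative to $pqs$. Here is the dichotomy: consider the two circumscribing balls of $pqr$ and of $pqs$ that have centers in the common plane $\pi_t^\beta$ (note any plane parallel to $pq$ that meets both circumcenter-axes determines one circumscribing ball of each triangle). Their bisector $C'$ is a plane; because the dihedral angle $\angle{\nor{pqr},\nor{pqs}}\le 14\eps<\pi/2$ by Corollary~\ref{dihedral}, $C'$ and $H_{pq}$ make an angle at most $14\eps$, so $C'$ also separates $r$ from $s$, and $v,s$ lie on the same side of $H_{pq}$. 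If $v$ and $s$ lie on the same side of $C'$, then on that side the ball $B'$ of $pqs$ contains $B$ (since $s$ is outside $B$ — or rather, the geometry is symmetric to Lemma~\ref{stab-free}), hence $v$ stabs $B'$ and we are done, with $t'=pqs$ and $v$ a neighbor vertex of $t'$. The remaining case is that $v$ sits in the thin wedge between $H_{pq}$ and $C'$ on the $s$-side.

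The step I expect to be the main obstacle is disposing of that thin-wedge case, because in Lemma~\ref{stab-free} it was exactly here that bounded aspect ratio was invoked, and here we have \emph{no} uniformity hypothesis. The resolution must be that, unlike the $\beta=0$ case, a vertex caught in the wedge between $H_{pq}$ and $C'$ still stabs the circumscribing ball of \emph{some} triangle incident to it and centered in $\pi_t^\beta$ — most plausibly $pqr$ itself viewed from $v$ as a neighbor vertex of a third triangle, or we simply recurse. Concretely, I would set up a potential-function / descent argument as in Lemma~\ref{stab-free}: either $v$ already stabs the desired ball of some triangle having $v$ as a neighbor vertex, or $v$ stabs a circumscribing ball (centered in $\pi_t^\beta$) of a new triangle $t''$ with strictly smaller power distance $\pow(v, \cdot)$; since power distances cannot decrease forever, the descent terminates at a triangle for which $v$ is a neighbor vertex. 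The quantitative cost of passing from $t$ to $t''$ across one edge will involve the dihedral angle bound $14\eps$ and the length scale $2\eps\gamma$ of the triangles, and accumulating these over the (bounded-length) descent is what produces the additive loss; matching it against the statement of Theorem~\ref{thm3} will require the $88\eps^2\gamma$ slack, so I would track constants only loosely here and verify that each single-edge step costs $O(\eps^2\gamma)$ in the $\beta$-parameter. The verification that moving the ball center from $\pi_t^\beta$ to $\pi_{t''}^{\beta'}$ for a suitable $\beta'$ keeps both $v$ inside and the center in the required plane is the routine-but-delicate computation I would defer.
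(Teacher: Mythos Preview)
Your descent-by-power-distance plan is the right skeleton and matches the paper's approach, but you have missed the one observation that makes the argument go through without uniformity. You correctly choose, for each triangle adjacent to the current one, the circumscribing ball whose center lies in the fixed plane $\pi_t^\beta$. The point you overlook is that once \emph{both} centers are constrained to lie in $\pi_t^\beta$, the bisector of the two balls is \emph{exactly} the plane through the shared edge perpendicular to $\pi_t^\beta$; since $\pi_t^\beta$ is parallel to $t$, this is literally $H_{pq}$. There is no wedge: $C'=H_{pq}$, the angle between them is $0$, and the dihedral bound $14\eps$ plays no role here. Consequently, the case split collapses: either the neighbor vertex $s$ stabs the $\pi_t^\beta$-centered ball of $pqr$ (done, with $t'=pqr$), or $s$ lies outside it and then on the $v$-side of $H_{pq}$ the ball of $pqs$ contains that of $pqr$, so $\pow(v,\cdot)$ strictly drops and you recurse. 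The paper packages this as: follow the segment from $r$ to $v$, list the edges $e_1,\ldots,e_k$ whose perpendicular planes $P_{e_i}$ it crosses, and step across them one by one; the last triangle $t_k$ has $v$ as a neighbor vertex.

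Because the wedge never arises, this lemma incurs \emph{no} additive loss in $\beta$ at all: the ball of $t'$ that $v$ stabs is genuinely centered in $\pi_t^\beta$, full stop. Your speculation that each edge-crossing costs $O(\eps^2\gamma)$ and that these accumulate to the $88\eps^2\gamma$ of Theorem~\ref{thm3} is misplaced. The $88\eps^2\gamma$ appears only \emph{after} this lemma, in the proof of Theorem~\ref{thm3}, and it comes from a single comparison: the ball of $t'$ with center in $\pi_t^\beta$ versus the $(\beta')$-ball of $t'$ (whose center lies on the normal line of $t'$, not in $\pi_t^\beta$); the discrepancy is governed by $\|c-c'\|\sin\angle(\nor{t},\nor{t'})\le 4\eps\gamma\sin 22\eps\le 88\eps^2\gamma$.
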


\begin{proof}
Let $B$ be a $\beta$-ball of $t$ stabbed by $w$. Consider the
edges of $T$ lying in $B$ and planes
passing through these edges which are orthogonal to $\pi_t^{\beta}$.
Let $P_e$ denote such a plane passing through the edge $e$.
Let $t=pqr$ and $pq$ be the edge so that $H_{pq}$ separates
$r$ and $w$
according to Lemma~\ref{twoside}. The line segment
$rw$ must cross $P_{pq}=H_{pq}$ and possibly others. Let
$pq=e_1,e_2,...,e_k$ be the sequence of edges so that
$rw$ crosses $P_{e_1},P_{e_2},...,P_{e_k}$ in this order.

Consider two triangles $t_i$ and $t_{i+1}$ incident to any
edge $e_i$ in the sequence $e_1,e_2,...,e_k$.
Let $B_i$ and $B_{i+1}$ be the two balls circumscribing 
$t_i$ and $t_{i+1}$ respectively
and having centers on the plane $\pi_t^{\beta}$. Observe that the
bisector of $B_i$ and $B_{i+1}$ is $P_{e_i}$.
If a vertex of $t_{i+1}$ lies inside $B_i$ we have $t'=t_i$
satisfying the lemma. Otherwise, $B_i$ is contained
in $B_{i+1}$ on the side of $P_{e_i}$ which contains $w$.
So, $\pow(B_i,w)\geq \pow(B_{i+1},w)$.
Since this relation holds for any $i\in [1,k]$, we have
either found the triangle $t'$ satisfying the lemma or
$0>\pow(B,w)=\pow(B_1,w)\geq\pow(B_k,w).$
In the latter case $\pow(B_k,w)$ is negative and hence $B_k$ contains
$w$ inside. The ball $B_k$ circumscribes $t_k$ and has center
in $\pi_t^{\beta}$. It is stabbed by $w$ where
$w$ is a neighbor vertex of $t_k$ satisfying properties of
$t'$ required by lemma.
\end{proof}

\begin{proof}[Proof of Theorem~\ref{thm3}]
Let $t\in T$ be $\beta$-stabbed.
By definition, the $\beta$-ball and $(-\beta)$-ball of $t$
are stabbed by a vertex $w$. Apply Lemma~\ref{alpha-stab} 
to both of these balls. Observe that the planes $(P_e)$ that
we construct in the lemma remain same for both of these
balls. It means that the segment $rw$ in the proof 
crosses same set of planes. In other words, the triangle
$t'$ guaranteed by Lemma~\ref{alpha-stab} remains same. 
Let $B$ and $B'$ be the
two circumscribing balls of $t'$ which have their centers
in $\pi_t^{\beta}$ and $\pi_t^{-\beta}$ respectively.
If we prove that $B$ and $B'$ are larger than
$(\beta-88\eps^2\gamma)$-ball of $t'$, we will be
done since then $t'$ will be locally $(\beta-88\eps^2\gamma)$-stabbed.

Let $c$ and $c'$ be the circumcenters of $t$ and $t'$
respectively. Since $t'$ has an edge in the diametric
ball $D_t$ and all triangles have circumradius less than
$\eps\gamma$, the distance $\|c-c'\|$ is at most $2\eps\gamma$.
We have $\angle{\nor{t},\nor{p}}\leq 7\eps$ and
$\angle{\nor{t'},\nor{s}}\leq 7\eps$ where $p$ and $s$
are vertices of $t$ and $t'$ respectively. Also
the distance between $p$ and $s$ cannot be more than $4\eps\gamma$
which gives $\angle{\nor{p},\nor{s}}\leq 8\eps$.
In all, $\angle{\nor{t},\nor{t'}}\leq 22\eps$ when
$\eps<0.1$. We want to estimate the
distance of $c'$ from the plane of $t$. In the worst
case this distance is
$$\|c-c'\|\sin\angle{\nor{t},\nor{t'}}\leq 4\eps\gamma\sin 22\eps\leq 
88\eps^2\gamma.$$
It means if we choose $\beta>88\eps^2\gamma$, the center
$c'$ lies inside the slab made by offsetting
$\pi_t$ by 
$\beta$ on both sides.
The distance of $c'$ from these
planes is at least $(\beta-88\eps^2\gamma)$. Therefore, with
$\beta>88\eps^2\gamma$ we have $B$ and $B'$ larger than
$(\beta-88\eps^2\gamma)$-ball of $t'$ proving the claim. 
\end{proof}

\begin{proof}[Proof of Theorem~\ref{thm4}]
We apply algorithm {\sf MeshFlip} on the $\eps$-dense
triangulation $T$ of a surface 
whose reach is $\gamma$. 
According to Theorem~\ref{thm3} output
triangulation cannot have any $88\eps^2\gamma$-stabbed triangle.
By Observation~\ref{obs-gab}, the output is $88\eps^2\gamma$-Gabriel.
\end{proof}

Instead of flipping all locally stabbed triangles, one may
flip more conservatively. 
If we go on flipping edges that are incident to $\beta$-stabbed triangles, 
we get a triangulation which is $(\beta+88\eps^2)$-Gabriel.
We flip less edges than 
{\sf MeshFlip} does and hence obtain a worse triangulation
in terms of approximation to Gabriel triangulation.\\

\section{Conclusions}
In this work we showed that a uniform dense surface
triangulation can be flipped to a Delaunay one using simple
Delaunay-like flips. If uniformity condition is dropped, 
we get almost Delaunay surface triangulation. 

This research ensues some open questions.
Can the dense triangulations be flipped to exact Delaunay
triangulation? It is unlikely that such triangulations can be
flipped to exact Gabriel triangulation. It might very well be that 
they cannot be flipped to exact Delaunay triangulations.
Our flip algorithm
converts dense triangulations to almost Gabriel triangulations.
Is it true that such triangulations are actually a weighted
Delaunay triangulation of its vertex set weighted appropriately?
Or, is it possible  to assign weights to the vertices and
carry out edge flips to convert a dense surface triangulation
to a weighted Delaunay one? We plan to address these questions
in future work. 


\cancel{
\begin{proof}[{\bf Proof of Lemma~\ref{twoside}}]
Let $\bar{v}$ be the orthogonal projection of $v$ onto the plane of
$pqr$.  First we show that $\bar{v} \not\in pqr$.

Assume to the contrary that $\bar{v} \in pqr$.  Since the radius-reach
ratio is at most $\bareps$ by Lemma~\ref{bareps} and $v$ lies inside
$S_{pqr}$, we have $\norm{p-v} \leq 2\bareps \gamma$ which implies that
$\norm{v-\bar{v}} \leq 2\bareps\gamma$.  Walk from $v$ towards $\bar{v}$
and let $abc$ be the first triangle in $T_t$ that we hit.  Let $y$ be
the point in $abc$ that we hit.  (The triangle $abc$ could possibly be
$pqr$.) We have $\norm{v-y} \leq \norm{v-\bar{v}} \leq 2\bareps\gamma$.
By the radius-reach ratio bound, we have $\norm{a-y} \leq
2\bareps\gamma$.  It follows that $\norm{a-v} \leq \norm{a-y} +
\norm{v-y} \leq 4\bareps\gamma$.  Then, $\angle{\nor{v},\nor{a}} \leq
\frac{4\bareps}{1-12\bareps}$ by Lemma~\ref{normlem}, and
$\angle{\nor{abc},\nor{a}} \leq 7\bareps$ by
Corollary~\ref{trinorm-cor}.  Therefore, $\angle{\nor{v},\nor{abc}} =
O(\bareps)$.  Let $\ell$ be an oriented line through $v$ and $\bar{v}$.
Because $\ell$ enters the polyhedron bounded by $T_t$ at $y \in abc$ and
then exits at $v$, $\angle \nor{v},\nor{abc}$ is greater than $\pi -
O(\bareps)$ which contradicts the previous deduction.

As $\bar{v} \not\in pqr$, it is clear that there exists an edge $qr$
such that $H_{qr}$ separates $p$ and $v$.
\end{proof}
}
\end{document}